\renewcommand{\phi}{\varphi}
\newcommand{\iy}{\infty}
\DeclareMathOperator{\trace}{Tr}
\DeclareMathOperator{\I}{I}
\DeclareMathOperator{\Lat}{Lat}
\DeclareMathOperator{\cov}{cov}
\newcommand{\Haar}{\mathfrak h}
\renewcommand{\Re}{\mathop{\rm{Re}}\nolimits}
\renewcommand{\Im}{\mathop{\rm{Im}}\nolimits}
\renewcommand{\leq}{\leqslant}
\renewcommand{\geq}{\geqslant}
\newcommand{\norm}[1]{\left\Vert #1\right\Vert}
\newcommand{\abs}[1]{\left| #1\right|}
\newcommand{\R}{\mathbb{R}}
\newcommand{\C}{\mathbb{C}}
\newcommand{\E}{\mathbb{E}}
\renewcommand{\P}{\mathbb{P}}
\newcommand{\orth}{\bot}
\newcommand{\scalar}[2]{\langle #1 , #2\rangle}
\newcommand{\ketbra}[2]{| #1 \rangle \langle #2 |}
\newcommand{\B}{\mathcal{B}}
\renewcommand{\H}{\mathcal{H}}
\newcommand{\K}{\mathcal{K}}
\newcommand{\U}{\mathcal{U}}
\newcommand{\M}{\mathcal{M}}
\newcommand{\MD}{\mathcal{M}^{1, +}}
\newcommand{\Msa}{\mathcal{M}^\text{sa}}
\renewcommand{\S}{\mathcal{S}}
\newcommand{\T}{\mathbb{T}}
\theoremstyle{plain}
\newtheorem{thm}{Theorem}[section] 
\newtheorem{cor}[thm]{Corollary}
\newtheorem{prop}[thm]{Proposition}
\newtheorem{lem}[thm]{Lemma}
\theoremstyle{definition}
\newtheorem{defn}[thm]{Definition}
\theoremstyle{remark}
\newtheorem{rem}[thm]{Remark}
\newtheorem{eg}[thm]{Example}
\title{Random repeated quantum interactions\\and random invariant states}
\author[I. Nechita]{Ion Nechita}
\address{Universit\'e de Lyon, Institut Camille Jordan, 43 blvd du 11 novembre 1918, F-69622 Villeurbanne-Cedex, France} \email{nechita@math.univ-lyon1.fr}
\author[C. Pellegrini]{Cl\'ement Pellegrini}
\address{School of Physics and National Institue for Theoretical Physics, University of KwaZulu Natal, Private Bag X54001, Durban 4000, South Africa}
\email{pelleg@math.univ-lyon1.fr}
\keywords{Quantum repeated interactions, random quantum channels, random matrices, peripheral spectrum, random density matrices} 
\subjclass[2000]{Primary 15A52; Secondary 94A40, 60F15}
\begin{document}

\begin{abstract}
%We consider repeated quantum interactions between two systems in the limit of a large number of interactions. We then add randomness to this setting, either by choosing Haar-distributed interaction unitaries or by considering random states on one of the systems. This allows us to introduce a new physically motivated ensemble of random density matrices called the \emph{asymptotic induced ensemble}. Convergence results to limit states are provided, both in the deterministic and the random cases. This is achieved by studying spectral properties of (random) quantum channels which guarantee the existence of unique invariant states.
%
%\vspace{2cm}
%
We consider a generalized model of repeated quantum interactions, where a system $\mathcal{H}$ is interacting in a random way with a sequence of independent quantum systems $\mathcal{K}_n, n \geq 1$. Two types of randomness are studied in detail. One is provided by considering Haar-distributed unitaries to describe each interaction between $\mathcal{H}$ and $\mathcal{K}_n$. The other involves random quantum states describing each copy $\mathcal{K}_n$. In the limit of a large number of interactions, we present convergence results for the asymptotic state of $\mathcal{H}$. This is achieved by studying spectral properties of (random) quantum channels which guarantee the existence of unique invariant states. Finally this allows to introduce a new physically motivated ensemble of random density matrices called the \emph{asymptotic induced ensemble}.
\end{abstract}

\maketitle

\section{Introduction}

Initially introduced in \cite{attal_pautrat} as a discrete approximation of Langevin dynamics, the model of repeated quantum interactions has found since many applications (quantum trajectories, stochastic control, etc.). In this work we generalize this model by allowing \emph{random} interactions at each time step. Our main focus is the long-time behavior of the reduced dynamics. 

Our viewpoint is that of Quantum Open Systems, where a ``small'' system is in interaction with an inaccessible environment (or an auxiliary system). We are interested in the reduced dynamics of the small system, which is described by the action of quantum channels. When repeating such interactions, under some mild conditions on the spectrum of the quantum channel, we show that the successive states of the small system converge to the invariant density matrix of the channel.

These considerations motivated us to consider random invariant states, and we introduce a new probability measure on the set of density matrices. There exists extensive literature \cite{braunstein, zyc_sommers, nechita, zyc_book} on what is a ``typical'' density matrix. There are two general categories of such probability measures on $\MD_d(\C)$: measures that come from metrics with statistical significance and the so-called ``induced measures'', where density matrices are obtained as partial traces of larger, random pure states. Our construction from Section \ref{sec:fixed} falls into the second category, since our model involves an open system in interaction with a chain of ``auxiliary'' systems. 

Next, we introduce two models of random quantum channels. In the first model, we allow for the states of the auxiliary system to be random. In the second one, the unitary matrices acting on the coupled system are assumed random, distributed along the Haar invariant probability on the unitary group, and independent between different interactions. Since the (random) state of the system fluctuates, almost sure convergence does not hold, and we state results in the ergodic sense. 

The article is structured as follows. The Section \ref{sec:rep_int} is devoted to presenting the model of quantum repeated interactions and its description via quantum channels. Section \ref{sec:spectral} contains some general facts about the spectra of completely positive maps, as well as some related tools from matrix analysis. Next, in Section \ref{sec:fixed} we study our first model, where the interaction unitary is a fixed, deterministic matrix. We prove that, under some assumptions on the spectrum of the quantum channel, the state of the system converges to the invariant state of the channel. It is at this time that we introduce the new ensemble of random density matrices, by transporting the unitary Haar measure via the application which maps a channel to its invariant state. The final two sections are devoted to introducing two models of random quantum channels, one where the interaction unitary is constant and the auxiliary states are i.i.d. density matrices (Sec. \ref{sec:random_env}) and another where the interaction unitaries are independent and Haar distributed (Sec. \ref{sec:iid_unitaries}). 

We introduce now some notation and recall some basic facts and terminology from quantum information theory. We write $\Msa_d(\C)$ for the set of self-adjoint $d \times d$ complex matrices and $\MD_d(\C)$ for the set of \emph{density matrices} (or states), $\MD_d(\C) = \{ \rho \in \Msa_d(\C) \, | \, \rho \geq 0, \trace[\rho] = 1\}$. Since our main focus is quantum information, all Hilbert spaces in this article are complex and finite dimensional. Scalar products are assumed linear in the second coordinate and, for two vectors $x \in \H, y \in \K$ we denote by $\ketbra x y \in \B(\K, \H)$ the map 
\[ \ketbra x y (z) = \scalar y z \cdot x, \quad \forall z \in \K.\]
An unit vector $x \in \H \simeq \C^d$ is called a \emph{pure state} and it is assimilated often with the orthogonal projection on $\C x$, $ \ketbra x x$. Finally, for a matrix $A \in \B(\H \otimes \K)  \simeq \B(\H) \otimes \B(\K)$, we define its \emph{partial trace} with respect to $\K$ as the unique element $B = \trace_\K[A] \in \B(\H)$ which verifies
\[ \trace[BX] = \trace [A (X \otimes \I_\K)], \quad \forall X \in \B(\H).\]

We shall also extensively use the \emph{Haar} (or uniform) measure $\Haar_d$ on the unitary group $\U(d)$; it is the unique probability measure which is invariant by left and right multiplication by unitary elements:
\[\forall V, W \in \U(d), \quad \forall f:\U(d) \to \C \text{ Borel}, \quad \int_{\U(d)}f(U) d\Haar_d(U) = \int_{\U(d)}f(VUW) d\Haar_d(U).\]

\section{The repeated quantum interaction model}\label{sec:rep_int}

In this introductory section we give a description of the physical model we shall use in the rest of the paper: \emph{repeated quantum interactions}. The setting, a system interacting repeatedly with ``independent'' copies of an environment, was introduced by S. Attal and Y. Pautrat in \cite{attal_pautrat} where it was shown that in the continuous limit (when the time between interactions approaches zero), the dynamics is governed by a quantum stochastic differential equation. A different model, where after each interaction an indirect quantum measurement of the system is performed, was considered by the second named author in \cite{pelleg_diffusive, pelleg_poisson} and shown to converge in the limit to the so-called stochastic Schr\"odinger equations. Here, we are concerned only with the discrete setting and with the limit of a large number of interactions. The study of random quantum trajectories is postponed to a later paper.

Consider a quantum system $\S$ described by a complex Hilbert space state $\H$. In realistic physical models, $\S$ is usually a quantum system with relatively few degrees of freedom and it represents the object of interest of our study; we shall refer to it as the \emph{small system}. Consider also another quantum system $\mathcal E$ which interacts with the initial small system $\S$. We shall call $\mathcal E$ the \emph{environment} and we denote by $\K$ its Hilbert state space. In this work we consider finite dimensional spaces $\H \simeq \C^d$ and $\K \simeq \C^{d'}$.

We shall eventually be interested in \emph{repeated} interactions between $\S$ and independent copies of $\mathcal E$, but let us start with the easier task of describing a single interaction between the ``small'' system and the environment. Assume that the initial state of the system is a product state $\sigma = \rho \otimes \beta$, where $\rho$ and $\beta$ are the respective states of the small system and the environment. The coupled system undergoes an unitary evolution $U$ and $U(\rho \otimes \beta)U^*$ is the global state after the interaction. The unitary operator $U$ comes from a Hamiltonian 
\[H_{tot}=H_\S \otimes \I+\I \otimes H_\mathcal E+H_{int},\]
where the operators $H_\S$ and $H_\mathcal E$ are the free Hamiltonians of the systems $\S$ and $\mathcal E$ respectively and $H_{int}$ represents the interaction Hamiltonian. We shall be interested in the situation where $H_{int} \neq 0$, otherwise there is no coupling and the system and the environment undergo separate dynamics. In this general case, the evolution unitary operator $U$ is given by
\[U = e^{-i\tau H_{tot}},\]
where $\tau > 0$ is the interaction time. Hence, the state of the coupled system $\S + \mathcal E$ after one interaction is given by
\[\sigma' = U(\rho \otimes \beta)U^*.\]
Since one is interested only in the dynamics of the ``small'' system $\S$, after taking the partial trace we obtain the final state of $\S$,
\begin{equation}\label{eq:single_interaction}
\rho' = \trace_\K[U(\rho \otimes \beta)U^*].
\end{equation}

We now move on to describe successive interactions between $\S$ and a chain of independent copies of $\mathcal E$. In order to do this, consider the countable tensor product
\[\K_{tot}=\bigotimes_{n=1}^\iy\mathcal{K}_n,\]
where $\K_n$ is the $n$-th copy of the environment ($\K_n \simeq \K \simeq \C^{d'}$). This setting can be interpreted in two different ways: globally, as an evolution on infinite dimensional countable tensor product $\H \otimes \K_{tot}$, or by discarding the environment, as a discrete evolution on $\B(\H) = \M_d(\C)$. Since we are interested only in the evolution of the ``small'' system, the latter approach is the better choice. From Eq.~(\ref{eq:single_interaction}), we obtain the recurrence relation
\begin{equation}\label{eq:rec_relation}
\rho_n = \trace_\K[U_n(\rho_{n-1} \otimes \beta_n)U_n^*],
\end{equation}
where $\rho_{n-1}, \rho_n \in \MD_d(\C)$ are the successive states of the system $\S$ at times $n-1$ and $n$, and $U_n$ and $\beta_n$ are the interaction unitary and respectively the state of the auxiliary system $\mathcal E$ for the $n$-th interaction. Note that at this stage we work in a general setting, without making any assumptions on the sequences $(U_n)_n$ and $(\beta_n)_n$.

We introduce now a more parsimonious description of repeated quantum interactions, via quantum channels. Recall that a linear map $\Phi:\M_d(\C) \to \M_d(\C)$ is called $k$-positive if the extended map $\Phi \otimes \I_k : \M_d(C) \otimes \M_k(\C) \to \M_d(C) \otimes \M_k(\C)$ is positive. $\Phi$ is called \emph{completely positive} if it is $k$-positive for all $k \geq 1$ (in fact $k=d$ suffices) and \emph{trace preserving} if $\trace[\Phi(X)] = \trace[X]$ for all $X \in \M_d(\C)$. By definition, a \emph{quantum channel} is a trace-preserving, completely positive linear map. The next proposition gives two very important characterizations of quantum channels. 

\begin{prop}[Stinespring-Kraus]\label{prop:stinespring_kraus}
A linear map $\Phi : \M_d(\C) \to \M_d(\C)$ is a quantum channel if and only if one of the following two equivalent conditions holds.
\begin{enumerate}
\item \textbf{(Stinespring dilation)} There exists a finite dimensional Hilbert space $\K = \C^{d'}$, a density matrix $\beta \in \MD_{d'}(\C)$ and an unitary operation $U \in \U(dd')$ such that
\[\Phi(X) = \trace_\K\left[ U(X \otimes \beta) U^* \right], \quad \forall X \in \M_d(\C).\]
\item \textbf{(Kraus decomposition)} There exists an integer $k$ and matrices $L_1, \ldots, L_k \in \M_d(\C)$ such that
\begin{equation}\label{eq:Kraus_decomposition}
\Phi(X) = \sum_{i=1}^{k} L_i X L_i^* ,\quad \forall X \in \M_d(\C)
\end{equation}
and
\[\sum_{i=1}^{k} L_i^* L_i = \I_d.\]
\end{enumerate}
\end{prop}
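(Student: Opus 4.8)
The plan is to prove the two implications and the equivalence of the two conditions separately, treating (1) $\Leftrightarrow$ ``quantum channel'' first, then (2) $\Leftrightarrow$ ``quantum channel''. For the forward direction of (1), suppose $\Phi(X) = \trace_\K[U(X\otimes\beta)U^*]$; complete positivity follows because $X \mapsto X\otimes\beta$ is completely positive (tensoring with a fixed positive matrix), conjugation by the unitary $U$ is completely positive, and the partial trace is completely positive, so $\Phi$ is a composition of completely positive maps. Trace preservation is immediate from $\trace[\trace_\K[\cdot]] = \trace[\cdot]$, unitary invariance of the trace, and $\trace[X\otimes\beta] = \trace[X]\trace[\beta] = \trace[X]$. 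For the reverse direction of (1), I would actually route through (2): given a Kraus decomposition, set $d' = k$, let $\beta = \ketbra{e_1}{e_1}$ be a rank-one state on $\C^{d'}$, and build $U$ by first defining the isometry $V : \H \to \H\otimes\K$, $Vx = \sum_i (L_i x)\otimes e_i$ (which is an isometry precisely because $\sum_i L_i^*L_i = \I_d$), then extending $V(\,\cdot\,\otimes e_1)$ to a unitary $U$ on $\H\otimes\K$ by an elementary linear-algebra argument; a direct computation then gives $\trace_\K[U(X\otimes\beta)U^*] = \sum_i L_i X L_i^* = \Phi(X)$.

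For (2), the forward direction (Kraus $\Rightarrow$ channel) is the easy computation already noted. The substantial content is ``channel $\Rightarrow$ Kraus'', and the standard route is via the Choi matrix: given a completely positive $\Phi$, form $C_\Phi = (\Phi\otimes\I_d)(\ketbra{\omega}{\omega})$ where $\omega = \sum_{j} e_j\otimes e_j \in \H\otimes\H$ is the (unnormalized) maximally entangled vector. Complete positivity of $\Phi$ is equivalent to $C_\Phi \geq 0$, so one may diagonalize $C_\Phi = \sum_{i=1}^k \ketbra{v_i}{v_i}$ with $v_i \in \H\otimes\H$, and then unvectorizing each $v_i$ to a matrix $L_i \in \M_d(\C)$ (via $v_i = \sum_{a,b} (L_i)_{ab}\, e_a\otimes e_b$, i.e. $v_i = (L_i\otimes\I)\omega$) yields $\Phi(X) = \sum_i L_i X L_i^*$. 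Finally, the trace-preservation condition $\trace[\Phi(X)] = \trace[X]$ for all $X$ translates, after testing against matrix units, exactly into $\sum_i L_i^*L_i = \I_d$. Combining the two halves gives the full cycle of equivalences.

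The main obstacle is the ``channel $\Rightarrow$ Kraus'' step, i.e. establishing that complete positivity of $\Phi$ forces the Choi matrix $C_\Phi$ to be positive semidefinite, which is the heart of Choi's theorem; the converse (positivity of $C_\Phi$ implies complete positivity) and all the trace-condition bookkeeping are routine index manipulations, and the Stinespring construction in (1) reduces cleanly to dilating an isometry to a unitary once (2) is in hand. I would allot most of the writeup to carefully setting up the vectorization/Choi correspondence and verifying the identity $(\Phi\otimes\I_d)(\ketbra{\omega}{\omega}) \geq 0 \Leftrightarrow \Phi$ completely positive, since every other piece is a short direct computation. (This is of course a classical result; in the paper one would most likely cite a standard reference such as Paulsen or the original work of Choi and of Stinespring rather than reproduce all details, so the ``proof'' may reasonably be a guided sketch with pointers to the literature.)
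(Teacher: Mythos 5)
Your sketch is correct and is the standard textbook argument (Choi matrix for the Kraus decomposition, dilation of the isometry $Vx=\sum_i (L_ix)\otimes e_i$ for the Stinespring form); the paper itself states this proposition without proof, as the classical Stinespring--Kraus theorem, so there is nothing to compare against. One small remark: you locate the difficulty in showing that complete positivity forces $C_\Phi\geq 0$, but that direction is immediate (apply $d$-positivity to the single positive matrix $\ketbra{\omega}{\omega}$); the actual content of Choi's theorem is the converse, namely that positivity of the one matrix $C_\Phi$ already yields a Kraus decomposition, and hence full complete positivity, via the eigendecomposition/unvectorization step you describe.
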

\begin{rem}
It can be shown that the dimension of the ancilla space in the Stinespring dilation theorem can be chosen $d'_0 = d^2$ and $\beta$ can be chosen to be a rank one projector. A similar result holds for the number of Kraus operators: one can always find a decomposition with $k=d^2$ operators. The \emph{Choi rank} of a quantum channel $\Phi$ is the least positive integer $k$ such that $\Phi$ admits a Kraus decomposition (\ref{eq:Kraus_decomposition}) with $k$ operators $L_i$. 
\end{rem}

We see now that Eq.~(\ref{eq:rec_relation}) can be re-written as
\[\rho_n = \Phi^{U_n, \beta_n}(\rho_{n-1}),\]
where $\Phi^{U, \beta}$ is the quantum channel
\begin{align*}
\MD_d(\C) &\to \MD_d(\C)\\
\rho &\mapsto \trace_\K[U(\rho \otimes \beta)U^*].
\end{align*}

After $n$ such interactions, the state of the system becomes
\begin{equation}\label{eq:compose_qc}
\rho_n = \Phi^{U_n, \beta_n} \circ \Phi^{U_{n-1}, \beta_{n-1}} \circ \cdots \circ \Phi^{U_1, \beta_1} \rho.
\end{equation}

%In the present work, we shall consider several models of quantum channels $\Phi^{U, \beta}$ with or without classical randomness. We focus next on one such quantum channel 
Let us now consider a fixed channel $\Phi=\Phi^{U, \beta}$ and show that the Stinesping and Kraus form of $\Phi$ are connected in a simple fashion. To this end, start with the Stinespring form of $\Phi$ and pick some orthonormal bases $\{e_i\}_{i=1}^d$ and $\{f_j\}_{j=1}^{d'}$ of respectively $\H = \C^d$ and $\K = \C^{d'}$ such that the state of the environment $\beta$ diagonalizes:
\[\beta = \sum_{j=1}^{d'} b_j \ketbra{f_j}{f_j}.\]
Next, endow the product space $\H \otimes \K = \C^{dd'}$ with the basis 
\begin{equation}\label{eq:product_basis}
\{ {e_1} \otimes {f_1}, {e_2}\otimes {f_1}, \ldots,  {e_n} \otimes {f_1}, {e_1} \otimes {f_2}, \ldots, {e_n} \otimes {f_2}, \ldots, {e_n} \otimes {f_k}\}.
\end{equation}
This particular ordering of the tensor product basis was preferred in order to have a simple expression for the partial trace operation with respect to the environment $\K$. Indeed, if a matrix $A \in \M_{dd'}(\C)$ is written in this basis and viewed as a $d' \times d'$ matrix of blocks $A_{ij} \in \M_d(\C)$:
\[ A = \begin{pmatrix}
A_{11} & A_{12} & \cdots & A_{1d'}\\
A_{21} & A_{22} & \cdots & A_{2d'}\\
\vdots & \vdots & \ddots & \vdots\\
A_{d'1} & A_{d'2} & \cdots & A_{d'd'}
\end{pmatrix},\]
then the computation of the partial trace with respect to $\K=\C^{d'}$ reads
\[\trace_\K[A] = \trace_K\left[\sum_{i,j=1}^{d'}A_{ij} \otimes \ketbra{f_i}{f_j}\right] = \sum_{i,j=1}^{d'}A_{ij} \cdot \scalar{f_j}{f_i} = A_{11} + A_{22} + \cdots +A_{d'd'}.\]
In other words, the partial trace of $A$ over the environment $\K$ is simply the trace of the block-matrix, that is the sum of the diagonal blocks of $A$. We apply now these ideas to the Stinespring form of a quantum channel, $\Phi(X) =\trace_\K[U(X \otimes \beta)U^*]$. Written as a block matrix in the basis defined in Eq.~(\ref{eq:product_basis}), the matrix $X \otimes \beta$ is diagonal, with diagonal blocks given by $b_j X \in \M_d(\C)$. Writing $U \in \U(dd')$ in the same fashion and taking the partial trace, we obtain
\begin{equation}\label{eq:Kraus_from_U}
\Phi(X) = \trace_\K[U(X \otimes \beta)U^*] = \sum_{i,j=1}^{d'}b_j U_{ij} X U_{ij}^* = \sum_{i,j=1}^{d'}(\sqrt{b_j} U_{ij}) X (\sqrt{b_j} U_{ij})^*,
\end{equation}
where $U_{ij} \in \M_d(\C)$ are the blocks of the interaction unitary $U$. One recognizes a Kraus decomposition for $\Phi$, where the Kraus elements are rescaled versions of the blocks of the Stinespring matrix $U$. Moreover, if $\beta$ is a rank one projector then all the $b_j$'s are zero except one, hence the Kraus decomposition we obtained has $d'$ elements.

%We write the unitary matrix $U$ in this particular basis as a $k \times k$ matrix of blocks $L_{ij} \in \M_n(\C)$:
%$$ U = \begin{pmatrix}
%L_{11} & L_{12} & \cdots & L_{1k}\\
%L_{21} & L_{22} & \cdots & L_{2k}\\
%\vdots & \vdots & \ddots & \vdots\\
%L_{k1} & L_{k2} & \cdots & L_{kk}
%\end{pmatrix}.$$
%Obviously, the adjoint matrix $U^*$ has block elements $\{L_{ji}^*\}_{ij}$. 
%The unitary condition $UU^* = U^*U = \I_{nk}$ translates to 
%$$\forall i,j \quad \sum L_{ik}L_{jk}^* = \sum L_{ki}L_{kj}^* = \delta_{i,j} \I_n.$$
%In this basis, the product state $\rho \otimes \beta$ has a diagonal form, with diagonal blocks $p_j\I_n$. 
%
%We get 
%$$\rho_1 = \Phi^{U, \beta}(\rho_0) = \trace_\K(U\rho_0 \otimes \beta U^*) = \sum_j p_j \sum_i L_{ij} \rho_0 L_{ij}^*.$$

\section{Spectral properties of quantum channels}\label{sec:spectral}

Since we shall be interested in repeated applications of quantum channels, it is natural that spectral properties of these maps should play an important role in what follows.
%This section is devoted to gathering the main results on this topic in the existing literature. 
One should note that most results of this section can be generalized to infinite dimensional Hilbert spaces.

The next lemma gathers some basic facts about quantum channels. Since quantum channels preserve the compact convex set of density matrices $\MD_d(\C)$, the first affirmation follows from the fixed point theorem of Markov-Kakutani \cite{ds}. The second and the third assertions are trivial (see \cite{ruskai} for further results on $L^p$ norms of quantum channels), and the last one is a consequence of 2-positivity.

\begin{lem}\label{lem:channel}
Let $\Phi : \M_d(\C) \to \M_d(\C)$ a quantum channel. Then
\begin{enumerate}
\item $\Phi$ has at least one invariant element, which is a density matrix;
\item $\Phi$ has trace operator norm of 1;
\item $\Phi$ has spectral radius of 1;
\item $\Phi$ satisfies the Schwarz inequality 
\[\forall X \in \M_d(\C), \quad \Phi(X)^* \Phi(X) \leq \norm{\Phi(1)} \Phi(X^*X).\]
\end{enumerate}
\end{lem}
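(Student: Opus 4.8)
The plan is to verify each of the four assertions in turn, using the minimal amount of machinery.

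For (1), the key is that $\Phi$ restricts to a continuous affine self-map of the compact convex set $\MD_d(\C)$. Indeed, $\Phi$ is linear hence affine, it is continuous (finite dimensions), and it maps $\MD_d(\C)$ into itself: complete positivity (even just positivity) gives $\Phi(\rho)\geq 0$, and trace preservation gives $\trace[\Phi(\rho)]=\trace[\rho]=1$. The Markov--Kakutani fixed point theorem (or just Brouwer, since $\MD_d(\C)$ is homeomorphic to a closed ball) then produces a fixed point $\rho_\infty\in\MD_d(\C)$, which is by construction an invariant density matrix.

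For (2), I would work with the trace norm $\norm{X}_1 = \trace\sqrt{X^*X}$. The lower bound $\norm{\Phi}_{1\to 1}\geq 1$ is immediate by testing on any density matrix: $\norm{\Phi(\rho)}_1 = \trace[\Phi(\rho)] = 1 = \norm{\rho}_1$, using positivity of $\Phi(\rho)$. For the upper bound, write an arbitrary self-adjoint $X$ via its Jordan decomposition $X = X_+ - X_-$ with $X_\pm\geq 0$ and $\norm{X}_1 = \trace[X_+]+\trace[X_-]$; then $\norm{\Phi(X)}_1 \leq \norm{\Phi(X_+)}_1 + \norm{\Phi(X_-)}_1 = \trace[X_+]+\trace[X_-] = \norm{X}_1$. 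The general (non-self-adjoint) case follows by the standard duality argument or by noting that it suffices to bound the norm on self-adjoint matrices for a Hermiticity-preserving map. Assertion (3) is then a consequence of (2): the spectral radius of any operator is at most any operator norm, so $r(\Phi)\leq \norm{\Phi}_{1\to1}=1$; conversely, by (1) there is an invariant density matrix, so $1$ is an eigenvalue and $r(\Phi)\geq 1$.

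For (4), I would invoke $2$-positivity of $\Phi$ directly. The operator-matrix inequality
\[
\begin{pmatrix} \norm{\Phi(\I)}\,\I & \Phi(X) \\ \Phi(X)^* & \Phi(X^*X)\end{pmatrix} \geq 0
\]
is what needs to be established; once it holds, a Schur-complement argument gives $\Phi(X)^*\Phi(X)\leq \norm{\Phi(\I)}\,\Phi(X^*X)$ (after a small regularization if $\Phi(\I)$ is not invertible, or simply because $\Phi(\I)\leq\norm{\Phi(\I)}\I$). To get the matrix inequality, start from the obvious positivity in $\M_d(\C)\otimes\M_2(\C)$ of
\[
\begin{pmatrix}\I & X\\ X^* & X^*X\end{pmatrix} = \begin{pmatrix}\I\\ X^*\end{pmatrix}\begin{pmatrix}\I & X\end{pmatrix}\geq 0,
\]
apply $\Phi\otimes\I_2$, which preserves positivity by $2$-positivity, and finally bound the top-left entry $\Phi(\I)$ above by $\norm{\Phi(\I)}\I$. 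The main obstacle, such as it is, is purely bookkeeping: being careful with the Schur-complement step when $\Phi(\I)$ is singular, and making sure the duality reduction in (2) is stated cleanly for non-Hermitian arguments. None of the four parts requires any idea beyond positivity, $2$-positivity, and compactness, so the proof should be short.
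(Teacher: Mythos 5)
Your proposal is correct and follows exactly the route the paper indicates: Markov--Kakutani (or Brouwer) on the compact convex set $\MD_d(\C)$ for (1), trace-norm contractivity of positive trace-preserving maps for (2) and hence (3), and the $2$-positivity/block-matrix argument for the Schwarz inequality (4). The only point worth stating explicitly is that the duality step in (2) is the cleanest closure for non-Hermitian inputs (the adjoint map is positive and unital, so Russo--Dye gives operator norm $1$), and that your Schur-complement worry in (4) is moot since the top-left block $\norm{\Phi(\I)}\,\I$ is already a strictly positive multiple of the identity for any nonzero channel.
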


If one looks at a channel $\Phi$ as an operator in the Hilbert space $\M_d(\C)$ endowed with the Hilbert-Schmidt scalar product, then one can introduce $\Psi$, the \emph{dual map} of $\Phi$. It is defined by the relation 
\[\trace[X \Phi(Y)] = \trace[\Psi(X) Y], \quad \forall X, Y \in \M_d(\C).\]
From Kraus decomposition $\Phi(X) = \sum L_i X L_i^*$, one can obtain a Kraus decomposition for the dual channel, $\Psi(X) = \sum L_i^* X L_i$. Note that the trace preserving condition for $\Phi$, $\sum L_i^* L_i = \I$ reads now $\Psi(\I) = \I$. Hence, the dual of a quantum channel is a unital (not necessarily trace-preserving) completely positive linear map. Using this idea, one can see that the partial trace operation $\trace_\K:\M_{dd'}(\C) \to \M_d(\C)$ is the dual of the tensoring operation $S_\K:\M_d(\C) \to \M_{dd'}(\C)$, $S(X)= X \otimes \I_{d'}$.

We now introduce some particular classes of positive maps which are known to have interesting spectral properties.

\begin{defn}
Let $\Phi : \M_d(\C) \to \M_d(\C)$ be a positive linear map. $\Phi$ is called \emph{strictly positive} (or positivity improving) if $\Phi(X) >0$ for all $X \geq 0$. $\Phi$ is called \emph{irreducible} if there is no projector $P$ such that $\Phi(P) \leq \lambda P$ for some $\lambda >0$.
\end{defn}

\begin{eg}\label{eg:unitary_conj}
Let $U \in \U(d)$ be a fixed unitary and consider the channel $\Phi: \M_d(\C) \to \M_d(\C)$, $\Phi(X) = U X U^*$. It is easy to check that the spectrum of $\Phi$ is the set 
\[\{\lambda_1 \overline \lambda_2\ \, | \, \lambda_1, \lambda_2 \text{ eigenvalues of } U\}.\] 
Since $\Phi$ maps pure states (i.e. rank-one projectors) to pure states, it neither irreducible, nor strictly positive.
\end{eg}

Obviously, a strictly positive map is irreducible. In fact, the following characterization of irreducibility is known \cite{evans}.

\begin{prop}
A positive linear map $\Phi:\M_d(\C) \to \M_d(\C)$ is irreducible if and only if the map $(1+\Phi)^{d-1}$ is strictly positive.
\end{prop}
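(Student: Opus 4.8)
The statement to prove is: a positive linear map $\Phi:\M_d(\C) \to \M_d(\C)$ is irreducible if and only if $(1+\Phi)^{d-1}$ is strictly positive (positivity improving), i.e. maps every nonzero positive semidefinite matrix to a strictly positive definite one.

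The plan is to work throughout with the \emph{supports} of positive semidefinite matrices, i.e. with the subspaces $\mathrm{range}(X) \subseteq \C^d$ for $X \geq 0$. First I would reformulate irreducibility: if $P$ is a projector with $\Phi(P) \leq \lambda P$, then $\mathrm{range}(\Phi(P)) \subseteq \mathrm{range}(P)$; conversely, if $\mathrm{range}(\Phi(P)) \subseteq \mathrm{range}(P)$ for some projector $P$, then by compactness/positivity one can find $\lambda$ with $\Phi(P) \leq \lambda P$ (choose $\lambda = \|\Phi(P)\|$, since $\Phi(P)$ is supported on $\mathrm{range}(P)$). So $\Phi$ is \emph{reducible} iff there is a nontrivial projector $P$ (i.e. $0 \ne P \ne \I_d$) with $\mathrm{range}(\Phi(P)) \subseteq \mathrm{range}(P)$ — an invariant-subspace condition. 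The key monotonicity lemma I would establish next: for $X, Y \geq 0$ with $\mathrm{range}(X) \subseteq \mathrm{range}(Y)$, positivity of $\Phi$ gives $\mathrm{range}(\Phi(X)) \subseteq \mathrm{range}(\Phi(Y))$; and for $X, Y \geq 0$, $\mathrm{range}(X + Y) = \mathrm{range}(X) + \mathrm{range}(Y)$. In particular, for any $X \geq 0$, writing $P$ for the support projector of $X$, we have $\mathrm{range}(\Phi(X)) = \mathrm{range}(\Phi(P))$.

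Now I would prove the two directions. For the $(\Leftarrow)$ direction (contrapositive): if $\Phi$ is reducible with invariant projector $P$, then $\mathrm{range}(\Phi(P)) \subseteq \mathrm{range}(P)$, hence $\mathrm{range}((1+\Phi)(P)) \subseteq \mathrm{range}(P)$, and inductively $\mathrm{range}((1+\Phi)^{d-1}(P)) \subseteq \mathrm{range}(P) \subsetneq \C^d$, so $(1+\Phi)^{d-1}(P)$ is not positive definite — $(1+\Phi)^{d-1}$ is not strictly positive. For the $(\Rightarrow)$ direction (contrapositive): suppose $(1+\Phi)^{d-1}$ is not strictly positive, so there is $X \geq 0$, $X \ne 0$, with $(1+\Phi)^{d-1}(X)$ not of full rank. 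Let $V_0 = \mathrm{range}(X)$ and define $V_{k+1} = \mathrm{range}((1+\Phi)(Y_k)) = V_k + \mathrm{range}(\Phi(Y_k))$ where $Y_k$ is any positive matrix with support $V_k$ (well-defined by the monotonicity lemma). Then $V_0 \subseteq V_1 \subseteq \cdots$ is an increasing chain of subspaces, and $\mathrm{range}((1+\Phi)^{d-1}(X)) = V_{d-1}$. If the chain were strictly increasing at every step it would reach $\C^d$ within $d-1$ steps (since $\dim V_0 \geq 1$), contradicting $\dim V_{d-1} < d$. Hence $V_{k+1} = V_k$ for some $k$, i.e. $\mathrm{range}(\Phi(Y_k)) \subseteq V_k$; taking $P$ to be the projector onto $V_k$ (which is nontrivial: $\dim V_k \geq 1$ and $\dim V_k \leq \dim V_{d-1} < d$), we get $\mathrm{range}(\Phi(P)) = \mathrm{range}(\Phi(Y_k)) \subseteq V_k = \mathrm{range}(P)$, so $\Phi(P) \leq \|\Phi(P)\| \cdot P$ and $\Phi$ is reducible.

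The main obstacle, and the place requiring the most care, is the support-range calculus for positive maps: establishing cleanly that $\mathrm{range}(\Phi(X))$ depends only on $\mathrm{range}(X)$ (not on $X$ itself), that it is monotone in $\mathrm{range}(X)$, and that $\mathrm{range}(X+Y) = \mathrm{range}(X)+\mathrm{range}(Y)$. The monotonicity is an easy consequence of $0 \le X \le cY$ (valid once $\mathrm{range}(X) \subseteq \mathrm{range}(Y)$, for suitable $c>0$) together with positivity of $\Phi$; the independence-of-representative then follows by sandwiching $P$ between multiples of $X$. Once this bookkeeping is in place the dimension-counting argument on the chain $V_0 \subseteq V_1 \subseteq \cdots$ is routine, and the passage between the ``$\Phi(P) \leq \lambda P$'' formulation of irreducibility and the invariant-subspace formulation is immediate.
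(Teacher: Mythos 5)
Your proof is correct. Note that the paper itself gives no argument for this proposition --- it is quoted as a known result from Evans and H\o{}egh-Krohn --- so there is no in-text proof to compare against; your support-range calculus (range of $\Phi(X)$ depends only on, and is monotone in, the support of $X$; $\mathrm{range}(X+Y)=\mathrm{range}(X)+\mathrm{range}(Y)$ for positive $X,Y$) together with the dimension count on the increasing chain $V_0\subseteq V_1\subseteq\cdots$ is exactly the standard argument from that reference, and all the steps you flag as needing care (the equivalence of $\Phi(P)\leq\lambda P$ with range invariance, and $X\leq cY$ when $\mathrm{range}(X)\subseteq\mathrm{range}(Y)$) check out.
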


Irreducible unital maps which satisfy the Schwarz inequality have very nice peripheral spectra. The proof of the following important result can be found in one of \cite{evans, farenick, groh}, in more general settings. 
%For the sake of completeness, we provide a simplified version of the proof.
\begin{thm}
%Let $\Phi:\M_d(\C) \to \M_d(\C)$ be an ($\Phi^*$) irreducible quantum channel. Then
%\begin{enumerate}
%\item The unit eigenvalue is simple and the (unique) invariant density matrix of $\Phi$ is inversible.
%\item The peripheral eigenvalues are simple and they form a subgroup of the unit circle $\T$.
%\end{enumerate}

If $\Psi$ is a unital, irreducible map on $\M_d(\C)$ which satisfies the Schwarz inequality, then the set of peripheral (i.e. modulus one) eigenvalues is a (possibly trivial) subgroup of the unit circle $\T$. Moreover, every peripheral eigenvalue is simple and the corresponding eigenspaces are spanned by unitary elements of $\M_d(\C)$.

\end{thm}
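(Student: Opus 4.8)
The plan is to run the classical noncommutative Perron--Frobenius argument, in three movements: first produce a faithful invariant functional and pin down the fixed--point space, then force peripheral eigenvectors to be unitary via the Schwarz inequality, and finally extract the group and simplicity statements from the multiplicative domain and from power--boundedness.

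First I would set up the auxiliary data. Since $\Psi$ is unital, $\I_d$ is a fixed point, so $1$ is a peripheral eigenvalue, and by Lemma~\ref{lem:channel}(3) the spectral radius of $\Psi$ is $1$. Passing to the Hilbert--Schmidt dual $\Psi^*$, which is a quantum channel, irreducibility transfers: the dual of a positivity--improving map is positivity--improving, so $(1+\Psi^*)^{d-1}$ is again positivity improving and $\Psi^*$ is irreducible. Hence $\Psi^*$, being a channel, has an invariant state (Lemma~\ref{lem:channel}(1)) which irreducibility makes \emph{faithful}: there is $\rho>0$ with $\Psi^*(\rho)=\rho$. Irreducibility of $\Psi$ itself forces the fixed--point set $\{Y\in\M_d(\C):\Psi(Y)=Y\}$ to be one--dimensional, hence equal to $\C\,\I_d$; this is the genuine Perron--Frobenius input and I would take it from \cite{evans}.

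Next, the crux: a peripheral eigenvector is a scalar multiple of a unitary. Let $\lambda\in\T$ and $X\neq 0$ with $\Psi(X)=\lambda X$. Because $\Psi$ is unital, the Schwarz inequality of Lemma~\ref{lem:channel}(4) reads $\Psi(X)^*\Psi(X)\leq\Psi(X^*X)$, i.e. $X^*X\leq\Psi(X^*X)$ since $|\lambda|=1$. Pairing the positive operator $\Psi(X^*X)-X^*X$ with the faithful state $\rho$ gives $\trace[\rho(\Psi(X^*X)-X^*X)]=\trace[(\Psi^*(\rho)-\rho)X^*X]=0$, so by faithfulness $\Psi(X^*X)=X^*X$. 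Thus $X^*X$ is a nonzero positive fixed point of $\Psi$, hence $X^*X=c\,\I_d$ with $c>0$; rescaling, $U:=c^{-1/2}X$ is an isometry of $\C^d$, so a unitary, with $\Psi(U)=\lambda U$. Note that then equality holds in \emph{both} Schwarz inequalities for $U$: $\Psi(U^*U)=\I_d=\Psi(U)^*\Psi(U)$ and $\Psi(UU^*)=\I_d=\Psi(U)\Psi(U)^*$.

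Finally, these equalities place $U$ in the multiplicative domain of the $2$--positive unital map $\Psi$ (Choi's theorem), so $\Psi(UW)=\Psi(U)\Psi(W)$ and $\Psi(WU)=\Psi(W)\Psi(U)$ for all $W\in\M_d(\C)$. Taking $W=U^*$ and using $\Psi(\I_d)=\I_d$ yields $\Psi(U^*)=\bar\lambda\,U^*$, so $\lambda^{-1}$ is peripheral; if $\Psi(V)=\mu V$ with $V$ another peripheral eigen--unitary, then $\Psi(UV)=\lambda\mu\,UV$ with $UV$ unitary, hence nonzero, so $\lambda\mu$ is peripheral. Together with $1$ being an eigenvalue, the peripheral spectrum is a subgroup of $\T$. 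For simplicity, if $\Psi(X)=\lambda X$ and $\Psi(Y)=\lambda Y$, write $X=cU$, $Y=c'V$ with $U,V$ unitary as above; then $\Psi(U^*V)=\Psi(U^*)\Psi(V)=\bar\lambda U^*\cdot\lambda V=U^*V$ is a fixed point, so $U^*V$ is a scalar and $Y$ is proportional to $X$: the eigenspace is one--dimensional. To upgrade this to algebraic simplicity, observe that each $\Psi^n$ is unital and positive, so $\norm{\Psi^n}=\norm{\Psi^n(\I_d)}=1$; a power--bounded operator admits no nontrivial Jordan block at a unimodular eigenvalue, so algebraic and geometric multiplicities of every peripheral eigenvalue coincide and equal $1$. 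The main obstacles I anticipate are the two "Perron--Frobenius'' facts of the first paragraph (faithfulness of the invariant state of $\Psi^*$ and one--dimensionality of the fixed--point space) and the trace--pairing step that converts the Schwarz inequality into an equality --- everything after that is bookkeeping.
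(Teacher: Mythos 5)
The paper does not actually prove this theorem; it refers the reader to \cite{evans, farenick, groh}. Your argument is exactly the standard noncommutative Perron--Frobenius proof from those sources (faithful invariant state for the dual, Schwarz equality forcing peripheral eigenvectors to be unitaries, multiplicative domain giving the group structure, power-boundedness ruling out Jordan blocks), and it is correct as a whole. Three small points of hygiene. First, the two facts you defer to \cite{evans} --- faithfulness of the invariant state of $\Psi^*$ and one-dimensionality of the fixed-point space --- are precisely the $\lambda=1$ case of the simplicity claim, so a self-contained write-up should include them; both are short (if $P$ is the support projection of an invariant $\rho$ then $\trace[\rho\,\Psi(\I-P)]=0$ forces $\Psi(\I-P)\leq \lambda(\I-P)$, contradicting irreducibility unless $P=\I$; and the fixed points of a unital Schwarz map with faithful invariant state form a $*$-subalgebra, which irreducibility prevents from containing a nontrivial projection). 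Second, under the stated hypotheses $\Psi$ is only positive, not completely positive, so $\Psi^*$ is a positive trace-preserving map rather than a quantum channel; the Markov--Kakutani argument of Lemma \ref{lem:channel}(a) still yields an invariant state, so nothing breaks, but the wording should be adjusted. Third, for the same reason you should invoke the multiplicative-domain theorem in its unital-Schwarz-map form (the polarization proof uses only the Schwarz inequality) rather than Choi's $2$-positive version, since $2$-positivity is not assumed.
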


Irreducible (and, in particular, strictly positive) quantum channels have desirable spectral properties, hence the interest one has for these classes of maps. As we shall see in Section \ref{sec:fixed}, irreducible maps are in certain sense generic. On the other hand, the strict positivity condition is rather restrictive and not suitable for the considerations on this work. Next, we develop these ideas, giving criteria for irreducibility and for strict positivity.

Let us start by analyzing strict positivity. Subspaces of product spaces $\C^{d} \otimes \C^{d'}$ with high entanglement have received recently great attention. In this direction, applications to the additivity conjecture \cite{hayden-leung, hayden-winter} are the most notable ones. The results in these papers, which rely on probability theory techniques deal with von Neumann entropy. When one looks at the rank, projective algebraic geometry comes into play. Indeed, possible states of the coupled system are modeled by the projective space $\P^{dd'}$. This space contains the \emph{product states}, $\P^{d-1} \otimes \P^{d'-1}$ as a subset called \emph{the Segre variety}. The following lemma, a textbook result in algebraic geometry, is obtained by computing the dimension of the Segre variety (see \cite{cubitt, partha, walgate}).

\begin{lem}\label{lem:entangled_subspace}
The maximum dimension of a subspace $S \subset \C^{d} \otimes \C^{d'}$ which does not contain any non-zero product elements $x \otimes y$ is $(d-1)(d'-1)$. 
\end{lem}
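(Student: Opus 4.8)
The plan is to pass to the matrix picture. Identify $\C^{d}\otimes\C^{d'}$ with the space $\M_{d\times d'}(\C)$ of $d\times d'$ complex matrices, the elementary tensor $x\otimes y$ corresponding to the rank-one matrix $x\,\transp{y}$. Under this identification a subspace $S\subseteq\C^{d}\otimes\C^{d'}$ contains no nonzero product element if and only if it contains no nonzero matrix of rank $\leq 1$. So two things must be shown: \textbf{(i)} there is a subspace of dimension exactly $(d-1)(d'-1)$ containing no nonzero rank-one matrix, and \textbf{(ii)} every subspace of dimension at least $(d-1)(d'-1)+1$ contains one.

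For \textbf{(ii)} I would argue by dimension counting on the affine cone. Let $R\subset\M_{d\times d'}(\C)\simeq\C^{dd'}$ be the set of matrices of rank $\leq 1$; it is the affine cone over the Segre variety, it contains $0$, and it has dimension $d+d'-1$ (it is the image of the map $(x,y)\mapsto x\,\transp{y}$ from $\C^{d}\times\C^{d'}$, whose generic fibres are one-dimensional). If $S$ is a subspace with $\dim S\geq (d-1)(d'-1)+1=dd'-d-d'+2$, then $S\cap R$ is nonempty (it contains $0$), hence by the dimension theorem for the intersection of a variety with a linear subspace in $\C^{dd'}$,
\[ \dim(S\cap R)\ \geq\ \dim S+\dim R-dd'\ \geq\ (dd'-d-d'+2)+(d+d'-1)-dd'\ =\ 1\ >\ 0, \]
so $S\cap R$ contains a nonzero matrix, i.e. $S$ contains a nonzero product vector. (Equivalently, in the projective language used above: $\P(S)$ and the Segre variety are subvarieties of $\P^{dd'}$ whose dimensions add up to $dd'-1$, so they must meet.)

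For \textbf{(i)} I would give an explicit construction. Let $\phi:\M_{d\times d'}(\C)\to\C^{d+d'-1}$ be the linear map whose $k$-th coordinate, for $k=2,\dots,d+d'$, is the anti-diagonal sum $\phi(A)_k=\sum_{i+j=k}A_{ij}$. Each anti-diagonal $\{(i,j):1\leq i\leq d,\ 1\leq j\leq d',\ i+j=k\}$ is nonempty, so placing a single $1$ on one entry of each anti-diagonal exhibits a right inverse of $\phi$; hence $\phi$ is surjective and $S:=\ker\phi$ has dimension $dd'-(d+d'-1)=(d-1)(d'-1)$. If a rank-one matrix $A=x\,\transp{y}$ lies in $S$, then all coefficients of the polynomial $\bigl(\sum_i x_i t^{\,i}\bigr)\bigl(\sum_j y_j t^{\,j}\bigr)=\sum_k \phi(A)_k\, t^{\,k}$ vanish, so one of the two factors is the zero polynomial, forcing $x=0$ or $y=0$. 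Thus $S$ contains no nonzero product element, and together with \textbf{(ii)} this proves that the maximum is exactly $(d-1)(d'-1)$.

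The only non-elementary ingredient is the intersection-dimension theorem invoked in \textbf{(ii)}, and this is precisely the point to be careful about; I expect it to be the main (though mild) obstacle, since one must make sure it is applied to the intersection of the \emph{linear} subspace $S$ with the variety $R$ (where it holds unconditionally once the intersection is nonempty, which is automatic here because $0\in S\cap R$), rather than to two arbitrary varieties. Everything else — nonemptiness of the anti-diagonals, surjectivity of $\phi$, the polynomial-multiplication identity, and the computation $\dim R=d+d'-1$ — is routine and can be cited from the standard references on the Segre variety.
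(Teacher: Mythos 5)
Your proof is correct. Note that the paper gives no proof of this lemma at all: it presents it as ``a textbook result in algebraic geometry, obtained by computing the dimension of the Segre variety'' and defers to \cite{cubitt, partha, walgate}. Your part \textbf{(ii)} is precisely the dimension count the paper alludes to, and you handle the one genuinely delicate point correctly: in the affine formulation the intersection-dimension theorem only bounds the dimension of the \emph{nonempty} components of $S\cap R$, and you secure nonemptiness via $0\in S\cap R$ (together with irreducibility of the cone $R$ of rank-$\leq 1$ matrices, which is closed, being cut out by the $2\times 2$ minors, and irreducible as the image of $\C^d\times\C^{d'}$); the parenthetical projective version, where $\P(S)$ and the Segre variety have complementary dimensions in $\P^{dd'-1}$ and hence must meet, is the cleaner standard route. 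Your part \textbf{(i)} — the anti-diagonal map whose kernel avoids rank-one matrices because $\C[t]$ has no zero divisors — is Parthasarathy's construction from \cite{partha}, and it is genuinely needed: the Segre dimension count alone only shows the maximum is at most $(d-1)(d'-1)$, not that this value is attained. So your write-up is a complete, self-contained proof of both halves of a statement the paper merely cites.
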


As a rather simple consequence of this lemma, we obtain a necessary condition for strict positivity.

\begin{prop}
Let $\Phi : \M_d(\C) \to \M_d(\C)$ be strictly positive quantum map. Then the Choi rank of $\Phi$ is at least $2d-1$.
\end{prop}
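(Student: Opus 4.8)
The plan is to connect the Choi rank of $\Phi$ to the dimension of a certain subspace of $\C^d \otimes \C^{d'}$ and then apply Lemma~\ref{lem:entangled_subspace}. Fix a Kraus decomposition $\Phi(X) = \sum_{i=1}^k L_i X L_i^*$ with $k$ equal to the Choi rank of $\Phi$. The standard dictionary between Kraus operators and the Choi matrix tells us that $k$ is the rank of the Choi matrix $C_\Phi = (\Phi \otimes \I_d)(\ketbra{\omega}{\omega})$, where $\omega = \sum_j e_j \otimes e_j$ is the (unnormalized) maximally entangled vector, and that the range of $C_\Phi$ is the span of the vectors $\vec(L_i) \in \C^d \otimes \C^d$, where $\vec$ is the usual vectorization (column-stacking) isomorphism $\M_d(\C) \simeq \C^d \otimes \C^d$. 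So it suffices to show that this $k$-dimensional subspace $\mathrm{span}\{\vec(L_i)\}$ contains a nonzero product vector $x \otimes y$ whenever $k \le (d-1)(d-1) + 1 = d^2 - 2d + 2$; equivalently, I will show that strict positivity forces $\mathrm{span}\{\vec(L_i)\}$ to contain \emph{no} nonzero product vector, and then Lemma~\ref{lem:entangled_subspace} (with $d' = d$) gives $k \le (d-1)^2$... wait, that would give $k \le (d-1)^2$, not $k \ge 2d-1$. Let me restate the logic correctly: I should instead look at the \emph{kernel} or at the dual picture.

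The correct reduction is via the dual map. Recall $\Phi$ strictly positive means $\Phi(X) > 0$ for every $X \ge 0$, $X \ne 0$; in particular $\Phi(\ketbra x x) > 0$ for every nonzero $x \in \C^d$, i.e. $\sum_i L_i \ketbra{x}{x} L_i^*$ has full rank $d$, which happens iff the vectors $L_1 x, \ldots, L_k x$ span $\C^d$. Now suppose for contradiction that the Choi rank $k$ satisfies $k \le 2d - 2$. Consider the linear map $T : \C^d \to (\C^d)^{\oplus k} = \C^{kd}$, $T(x) = (L_1 x, \ldots, L_k x)$. Equivalently, stack the $L_i$ into a single $kd \times d$ matrix $L = \transp{[\,\transp{L_1} \cdots \transp{L_k}\,]}$; strict positivity says that for every nonzero $x$, the vector $Lx$, viewed as a $k$-tuple of vectors in $\C^d$, spans $\C^d$. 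I want to produce, under the assumption $k \le 2d-2$, a nonzero $x$ for which $L_1 x, \ldots, L_k x$ fail to span $\C^d$ — this is exactly a statement about a subspace of $\C^k \otimes \C^d$ not meeting the Segre variety, which is where Lemma~\ref{lem:entangled_subspace} enters: the subspace in question is the range of $x \mapsto (L_i x)_i$, but rank deficiency of $\{L_i x\}_i$ is detected by a product covector, so one works with the row space.

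Concretely: $\{L_i x\}_{i=1}^k$ fails to span $\C^d$ iff there is a nonzero $y \in \C^d$ with $y^* L_i x = 0$ for all $i$, i.e. $\transp{x}(\transp{L_i}\,\bar y) = 0$ for all $i$, i.e. the vector $x$ is orthogonal to $\transp{L_i}\bar y$ for all $i$. So strict positivity is equivalent to: for every nonzero $y$, the vectors $\{\transp{L_i}\bar y\}_{i=1}^k$ span $\C^d$; negating, $\Phi$ fails to be strictly positive iff the subspace $V = \mathrm{span}\{\vec(\transp{L_i})\}_{i=1}^k \subseteq \C^d \otimes \C^d$ contains a nonzero product vector $x \otimes \bar y$. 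Here $\dim V = k$ since the $\transp{L_i}$ are independent (their vectorizations span the range of the Choi matrix of the transposed channel, still of rank $k$). By Lemma~\ref{lem:entangled_subspace} with $d' = d$, any subspace of $\C^d \otimes \C^d$ of dimension exceeding $(d-1)^2$ must contain a nonzero product vector; but $(d-1)^2 < 2d - 1$ only for $d < \dots$ — this again gives the wrong direction, so the genuine argument must use a more refined count. The main obstacle, and the step I would spend the most care on, is getting this dimension bookkeeping right: the sharp statement is that strict positivity forces the map $y \mapsto \{\transp{L_i}\bar y\}_i$ to be "surjective on each line," which by a dimension count on the incidence variety $\{(y, x) : x \perp \transp{L_i}\bar y\ \forall i\}$ inside $\P^{d-1} \times \P^{d-1}$ shows this variety is empty only if the number of independent linear conditions, namely $k$, exceeds the dimension $2(d-1)$ of the product of projective spaces, i.e. $k \ge 2(d-1) + 1 = 2d - 1$. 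That projective-geometry dimension count — showing that $k \le 2d-2$ conditions always have a common solution on $\P^{d-1}\times\P^{d-1}$, hence violating strict positivity — is the crux, and it is essentially a restatement of Lemma~\ref{lem:entangled_subspace} applied after the right reformulation.
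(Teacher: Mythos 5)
Your final incidence-variety count does prove the proposition, and it is the paper's argument in disguise; but your write-up contains a false intermediate claim and two abandoned detours, all stemming from applying Lemma~\ref{lem:entangled_subspace} to the wrong subspace. The reduction you correctly identify is: strict positivity of $\Phi$ is equivalent to the non-existence of nonzero $x,y\in\C^d$ with $\scalar{y}{L_i x}=\trace[L_i\ketbra{x}{y}]=0$ for all $i$. The subspace this condition concerns is not $\mathrm{span}\{L_i\}$ (dimension $k$) but its \emph{orthogonal complement} $S=\bigcap_{i=1}^k (L_i^*)^\orth\subset\M_d(\C)\simeq\C^d\otimes(\C^d)^*$: the condition says exactly that $S$ contains no nonzero rank-one matrix $\ketbra{x}{y}$, i.e.\ no product vector. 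Lemma~\ref{lem:entangled_subspace} with $d'=d$ then gives $\dim S\leq(d-1)^2$, while $S$, being an intersection of $k$ hyperplanes, satisfies $\dim S\geq d^2-k$; hence $d^2-k\leq d^2-(2d-1)$, i.e.\ $k\geq 2d-1$, with the inequality coming out in the right direction and no further geometry needed. Your assertion that $\Phi$ fails to be strictly positive iff $V=\mathrm{span}\{\transp{L_i}\}$ \emph{contains} a nonzero product vector is wrong as stated (the product vector lies in $V^\orth$, not in $V$), and this sign error is what pushes you to abandon the linear-algebraic reduction twice and finally to redo the dimension count on $\P^{d-1}\times\P^{d-1}$ by hand. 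That count is valid --- $k$ hyperplane sections of the $2(d-1)$-dimensional Segre variety have a common point whenever $k\leq 2d-2$, so emptiness forces $k\geq 2d-1$ --- but it amounts to re-proving Lemma~\ref{lem:entangled_subspace} rather than applying it.
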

\begin{proof}
Let $\Phi(X) = \sum_{i=1}^k L_i X L_i^*$ be a minimal Kraus decomposition of a strictly positive channel $\Phi$. 
For all $x \neq 0$, $\Phi(\ketbra x x)$ has full rank, and thus, for all non-zero $y \in \C^d$, 
\[\trace\left[\Phi(\ketbra x x) \ketbra y y \right] = \sum_{i=1}^k \abs{\scalar{y}{L_i x}}^2 >0.\]
Hence, for all non-zero $x,y \in \C^d$, there exist an $i$ such that $\scalar{y}{L_i x} = \trace[L_i \ketbra x y] \neq 0$, or, in other words, $L_i^*$ is \emph{not} orthogonal to $\ketbra x y$ with respect to the Hilbert-Schmidt scalar product. Consider now the space $S = \bigcap_{i=1}^k (L_i^*)^\orth \subset \M_d(\C)$. Obviously, $S$ does not contain any rank one matrices $\ketbra{x}{y}$. Under the usual isomorphism $\C^d \otimes (\C^d)^* \simeq \M_d(\C)$, product vectors $x \otimes y$ are identified with rank one matrices $\ketbra{x}{y}$, so, by the Lemma \ref{lem:entangled_subspace}, we get $\dim S \leq (d-1)^2 = d^2-(2d-1)$. Since $S$ is the intersection of $k$ subspaces of dimension $d^2-1$, we get $d^2-k \leq \dim S \leq d^2 - (2d-1)$ which implies $k \geq 2d-1$.
\end{proof}

We now turn to irreducible quantum maps and state some results which will be useful later, when showing that irreducibility is generic for a specific model of random quantum channels. 

The following result of \cite{farenick} gives necessary and sufficient conditions for a map written in the Kraus form to be irreducible. We denote by $\Lat(T)$ the lattice of invariant subspaces of an operator $T \in \M_d(\C)$.

\begin{prop}\label{prop:irred_lat}
Consider the map $\Phi:\M_d(\C) \to \M_d(\C)$ defined by $\Phi(x) = \sum_{i=1}^k L_i X L_i^*$, with $L_i \in \M_d(\C)$, $i=1, \ldots, k$. Then $\Phi$ is irreducible if and only if $\bigcap_{i=1}^k \Lat(L_i)$ is trivial.
\end{prop}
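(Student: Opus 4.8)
The plan is to establish both implications by a careful analysis of how invariant subspaces of the Kraus operators relate to subprojections that are ``shrunk'' by $\Phi$. First I would note that, since we work with positive maps and projectors, the natural object to track is the following: for a projector $P$ onto a subspace $V \subset \C^d$, the condition $\Phi(P) \leq \lambda P$ for some $\lambda > 0$ is equivalent to saying that $\Phi$ maps the ``corner'' $P\M_d(\C)P$ into itself, and (by positivity, testing against $\I - P$) that $(\I-P)\Phi(P)(\I-P) = 0$, i.e.\ $\trace[\Phi(P)(\I-P)] = 0$. Using the Kraus form $\Phi(P) = \sum_i L_i P L_i^*$ and positivity of each summand, $\trace[\Phi(P)(\I-P)] = \sum_i \trace[(\I-P)L_i P L_i^*(\I-P)] = \sum_i \norm{(\I-P)L_iP}_{HS}^2$, so $\Phi(P) \leq \lambda P$ for some $\lambda$ forces $(\I-P)L_iP = 0$ for all $i$, which is exactly the statement that $V = \operatorname{ran}P$ is invariant under every $L_i$, i.e.\ $V \in \bigcap_i \Lat(L_i)$.

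For the ``only if'' direction: suppose $\Phi$ is \emph{not} irreducible, so there is a projector $P \neq 0, \I$ with $\Phi(P) \leq \lambda P$. (One should check the degenerate cases $P = 0$ and $P = \I$ are excluded by the definition of irreducibility, or are harmless.) By the computation above, $\operatorname{ran}P$ is a nontrivial common invariant subspace of $L_1, \ldots, L_k$, so $\bigcap_i \Lat(L_i)$ is nontrivial. Conversely, if $V$ is a nontrivial common invariant subspace with projector $P$, then $(\I-P)L_iP = 0$ for all $i$, hence $L_iP = PL_iP$, and therefore $\Phi(P) = \sum_i L_iPL_i^* = \sum_i (PL_iP)(PL_iP)^*$ lies in $P\M_d(\C)P$, so $\Phi(P) = P\Phi(P)P \leq \norm{\Phi(P)} P$; since $\norm{\Phi(P)} \leq \norm{\Phi}_{\infty}\cdot\norm{P}$ is a finite positive constant (indeed $\Phi(P) \neq 0$ because $\Phi$ is trace-preserving and $\trace P > 0$, if one wants $\lambda$ strictly positive we may take $\lambda = \norm{\Phi(P)}$), we get $\Phi(P) \leq \lambda P$, so $\Phi$ is reducible. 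This proves the contrapositive of ``if''.

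The one subtlety I expect to be the main point requiring care is the equivalence at the heart of the first paragraph, namely that $\Phi(P) \leq \lambda P$ (an inequality only ``above'' on the range of $P$) already forces the \emph{sharp} containment $L_iP = PL_iP$. This hinges on the positivity of the individual Kraus terms $L_iPL_i^* \geq 0$ and on the fact that a sum of positive operators dominated by something supported on $\operatorname{ran}P$ must have each summand supported on $\operatorname{ran}P$; equivalently, that $\trace[\Phi(P)(\I-P)] = 0$ kills each Hilbert--Schmidt norm $\norm{(\I-P)L_iP}_{HS}$. I would make sure to state this as a small auxiliary observation. Everything else is bookkeeping: translating between ``invariant subspace of each $L_i$'', ``$P\M_d(\C)P$ is $\Phi$-invariant'', and ``$\Phi(P) \leq \lambda P$'', and checking that the constant $\lambda$ can be taken strictly positive (using $\trace[\Phi(P)] = \trace[P] > 0$).
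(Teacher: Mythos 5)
Your argument is correct and complete; note that the paper itself gives no proof of this proposition, citing \cite{farenick} instead, so you have in effect supplied the standard self-contained argument. The two key points are exactly the ones you isolate: (i) $\Phi(P)\leq\lambda P$ together with positivity of each Kraus term forces $(\I-P)L_iP=0$ for every $i$ (a sum of positive operators dominated by an operator supported on $\operatorname{ran}P$ has each summand supported there), and (ii) conversely a common invariant subspace yields $\Phi(P)=P\Phi(P)P\leq\norm{\Phi(P)}\,P$. One small correction: the proposition does not assume $\sum_i L_i^*L_i=\I$, so you cannot invoke trace preservation to guarantee $\Phi(P)\neq 0$; but this is harmless, since if $\Phi(P)=0$ then $\Phi(P)\leq\lambda P$ holds for any $\lambda>0$ and $\Phi$ is reducible anyway. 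You should also make explicit, as you indicate, that ``projector'' in the definition of irreducibility is meant to exclude $P=0$ and $P=\I$, matching the requirement that the common invariant subspace be nontrivial.
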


Of course, quantum channels of Choi rank one (i.e. unitary conjugations, see also Example \ref{eg:unitary_conj}), $\Phi(X) = L X L^*$, with $L^*L=\I$ cannot be irreducible, since they leave invariant eigenprojectors of $L$. When looking at channels with Choi rank at least two, an useful criterion for deciding whether $\bigcap_{j=1}^k \Lat(L_j)$ is trivial or not is given by the following two results. The first proposition gives necessary and sufficient conditions for two matrices $A$ and $B$ to share a common eigenvector, and the second one generalizes this idea to arbitrary common subspaces. 

\begin{prop}[The Shemesh criterion, \cite{shemesh}]
Two matrices $A,B \in \M_d(\C)$ have a common eigenvector if and only if
\[ \bigcap_{i,j=1}^{d-1} \ker [A^i,B^j] \neq \{0\},\]
or, equivalently, iff
\[\det \sum_{i,j=1}^{d-1} [A^i,B^j]^* \cdot [A^i,B^j] = 0.\]
\end{prop}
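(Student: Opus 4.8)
The plan is to establish the two equivalent conditions separately. For the first, I would show that the common eigenvector condition is equivalent to the non-triviality of $\bigcap_{i,j=1}^{d-1} \ker[A^i,B^j]$. One direction is immediate: if $x$ is a common eigenvector, say $Ax = \alpha x$ and $Bx = \beta x$, then $A^i x = \alpha^i x$ and $B^j x = \beta^j x$, so $[A^i,B^j]x = A^i B^j x - B^j A^i x = \alpha^i\beta^j x - \beta^j \alpha^i x = 0$ for all $i,j$; hence $x$ lies in the intersection of all these kernels. For the converse, let $S = \bigcap_{i,j=1}^{d-1} \ker[A^i,B^j]$ and suppose $S \neq \{0\}$. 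The key observation is that $S$ is invariant under both $A$ and $B$: this follows because the Cayley–Hamilton theorem lets one express $A^d$ (and any higher power) as a polynomial of degree $\leq d-1$ in $A$, so the conditions $[A^i,B^j]x = 0$ for $1 \le i,j \le d-1$ in fact force $[p(A),q(B)]x = 0$ for all polynomials $p,q$. A short computation of the form $A(B^j x)$ then stays inside $S$, and similarly for $B$. Once $S$ is a nonzero subspace invariant under both $A$ and $B$, the restrictions $A|_S$ and $B|_S$ are commuting operators on $S$ (they commute on $S$ by construction), so they have a common eigenvector in $S$ by the standard simultaneous-triangularization argument for commuting matrices over $\C$.

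For the equivalence with the determinant condition, I would argue that the Hermitian positive semidefinite matrix $M = \sum_{i,j=1}^{d-1} [A^i,B^j]^* [A^i,B^j]$ satisfies $\ker M = \bigcap_{i,j} \ker[A^i,B^j]$. Indeed, $x \in \ker M$ iff $\langle x, Mx\rangle = 0$ iff $\sum_{i,j}\norm{[A^i,B^j]x}^2 = 0$ iff $[A^i,B^j]x = 0$ for every pair $(i,j)$. Therefore $\bigcap_{i,j}\ker[A^i,B^j] \neq \{0\}$ precisely when $M$ is singular, i.e. $\det M = 0$.

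The main obstacle — really the only nontrivial point — is verifying that the subspace $S = \bigcap_{i,j=1}^{d-1}\ker[A^i,B^j]$ is simultaneously invariant under $A$ and $B$, given that we only impose the commutator conditions up to exponent $d-1$ rather than for all exponents. The resolution hinges on using Cayley–Hamilton to promote the finitely many conditions to conditions for arbitrary polynomials in $A$ and in $B$; I would want to be slightly careful that the bookkeeping (which powers of $A$ versus $B$ appear when expanding $A \cdot [A^i, B^j]$ as a combination of commutators of lower-index terms) actually closes up within the range $1 \le i,j \le d-1$ after reducing via the characteristic polynomial. Everything else is routine linear algebra over $\C$.
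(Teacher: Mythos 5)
Your plan is correct, and it is essentially Shemesh's original argument from the cited reference; the paper itself states this proposition without proof, so there is no in-text argument to compare against. The one step you flag as delicate does close up exactly as you hope: for $x\in S=\bigcap_{i,j=1}^{d-1}\ker[A^i,B^j]$, bilinearity of the commutator together with Cayley--Hamilton (and the trivial identity $[\I,B^j]=0$) upgrades the hypotheses to $[p(A),q(B)]x=0$ for all polynomials $p,q$, and then
\[ [A^i,B^j](Ax)=A^i(B^jA x)-B^j A^{i+1}x=A^i(AB^jx)-A^{i+1}B^jx=0,\]
using $[A,B^j]x=0$ for the first term and $[A^{i+1},B^j]x=0$ for the second; the case for $B$ is symmetric. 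The rest of your outline (commuting restrictions on the nonzero invariant subspace $S$ yield a common eigenvector; the positive semidefinite matrix $M$ has $\ker M=S$, so $\det M=0$ iff $S\neq\{0\}$) is routine and correct as written.
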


In order to move on from common eigenvectors to common invariant subspaces, we consider antisymmetric tensor powers (or wedge powers) of matrices (see \cite{bhatia}, Ch. I). Given $A \in \M_d(\C)$ and and an integer $1 \leq k \leq n$, the $k$-th wedge power of $A$, denoted by $A^{\wedge k}$, is defined as the restriction of $A^{\otimes k}$ to the antisymmetric tensor product $(\C^d)^{\wedge k}$. More precisely, $A^{\wedge k}$ is a $n \times n$ matrix, where $n=\binom{d}{k}$. Its matrix elements are indexed by couples $(\alpha, \beta)$ of strictly increasing sequences of size $k$ from $\{1, \ldots , d\}$:
\[\left(A^{\wedge k}\right)_{\alpha, \beta} = \det A[\alpha | \beta],\]
where $A[\alpha | \beta]$ is the submatrix of $A$ with rows indexed by $\alpha$ and columns indexed by $\beta$. The next result of \cite{george-ikramov} is an easy consequence of the fact that if $\lambda_1, \ldots, \lambda_k$ are eigenvalues of $A$ with linear independent vectors $v_1, \ldots, v_k$, then $\lambda_1 \lambda_2 \cdots \lambda_k$ is an eigenvalue of $A^{\wedge k}$ with corresponding eigenvector $v_1 \wedge \cdots \wedge v_k$.

\begin{prop}[Generalized Shemesh criterion, \cite{george-ikramov}]\label{prop:shemesh_gen}
Let $A, B \in \M_d(\C)$ be two complex matrices. If $A$ and $B$ have a common invariant subspace of dimension $k$ (for $1 \leq k \leq d-1$), then their $k$-th wedge powers have a common eigenvector, and hence (we put $n=\binom{d}{k}$)
\[ \bigcap_{i,j=1}^{n-1} \ker [(A^{\wedge k})^i,(B^{\wedge k})^j] \neq \{0\},\]
or, equivalently,
\[\det \sum_{i,j=1}^{n-1}  [(A^{\wedge k})^i,(B^{\wedge k})^j]^* \cdot [(A^{\wedge k})^i,(B^{\wedge k})^j] = 0.\]
\end{prop}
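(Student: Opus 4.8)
The plan is to reduce the statement about a common invariant subspace of dimension $k$ to a statement about a common eigenvector, and then invoke the ordinary Shemesh criterion (the preceding proposition) applied to the pair $(A^{\wedge k}, B^{\wedge k})$. The key observation, already flagged in the paragraph preceding the statement, is the following functorial fact about wedge powers: if $V \subset \C^d$ is a subspace of dimension $k$ that is invariant under $A$, then the line $\bigwedge^k V \subset (\C^d)^{\wedge k}$ is invariant under $A^{\wedge k}$, i.e.\ it is a one-dimensional eigenspace of $A^{\wedge k}$. Indeed, if $v_1, \ldots, v_k$ is a basis of $V$, then $A^{\wedge k}(v_1 \wedge \cdots \wedge v_k) = (Av_1) \wedge \cdots \wedge (Av_k)$, and since each $Av_j$ lies back in $V = \mathrm{span}(v_1, \ldots, v_k)$, multilinearity and antisymmetry of the wedge product force $(Av_1) \wedge \cdots \wedge (Av_k)$ to be a scalar multiple of $v_1 \wedge \cdots \wedge v_k$ (the scalar being $\det$ of the matrix of $A|_V$ in the basis $v_j$).

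First I would establish this lemma carefully, paying attention to the one subtlety: I need $v_1 \wedge \cdots \wedge v_k \neq 0$, which holds precisely because $v_1, \ldots, v_k$ are linearly independent — so the hypothesis $\dim V = k$ is what guarantees we obtain a genuine nonzero eigenvector of $A^{\wedge k}$ rather than the trivial vector. Applying the lemma to both $A$ and $B$ with the \emph{same} common invariant subspace $V$, the nonzero vector $w = v_1 \wedge \cdots \wedge v_k \in (\C^d)^{\wedge k}$ is simultaneously an eigenvector of $A^{\wedge k}$ and of $B^{\wedge k}$. Thus $A^{\wedge k}$ and $B^{\wedge k}$, viewed as ordinary matrices in $\M_n(\C)$ with $n = \binom{d}{k}$, possess a common eigenvector. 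Now the displayed kernel and determinant conditions follow by applying the ordinary Shemesh criterion verbatim to the pair $(A^{\wedge k}, B^{\wedge k})$ in dimension $n$: a common eigenvector exists iff $\bigcap_{i,j=1}^{n-1} \ker[(A^{\wedge k})^i, (B^{\wedge k})^j] \neq \{0\}$, equivalently iff the determinant of $\sum_{i,j=1}^{n-1} [(A^{\wedge k})^i, (B^{\wedge k})^j]^* [(A^{\wedge k})^i, (B^{\wedge k})^j]$ vanishes.

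The content here is genuinely just the wedge-power lemma plus a black-box citation, so there is no deep obstacle; the only place demanding care is making sure the implication runs in the direction claimed (common invariant subspace of dimension $k$ $\Rightarrow$ common eigenvector of the wedge powers) and not pretending it is an equivalence — the converse can fail, since a common eigenvector of $A^{\wedge k}$ need not be decomposable as $v_1 \wedge \cdots \wedge v_k$ (it need not lie on the Grassmannian / satisfy the Pl\"ucker relations), so it need not correspond to an actual $k$-dimensional common invariant subspace. Accordingly the statement is phrased, correctly, as a one-way necessary condition, and that is all the proof needs to deliver. I would also remark in passing that the case $k=1$ recovers the ordinary Shemesh criterion, since $A^{\wedge 1} = A$.
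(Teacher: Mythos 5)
Your proof is correct and follows exactly the route the paper intends: the paper gives no written proof, only the remark preceding the statement that wedge powers convert invariant subspaces into eigenvectors, deferring the rest to \cite{george-ikramov}, and you have filled in precisely that argument (wedge a basis of the common invariant subspace $V$ to get the nonzero decomposable vector $v_1\wedge\cdots\wedge v_k$, which is a common eigenvector with eigenvalues $\det(A|_V)$ and $\det(B|_V)$, then apply the ordinary Shemesh criterion to the pair $(A^{\wedge k},B^{\wedge k})$ in dimension $n=\binom{d}{k}$). Your version of the key lemma is in fact slightly sharper than the paper's stated hint, since it works for an arbitrary basis of $V$ rather than requiring eigenvectors of $A$, and your closing caveat that the implication is one-way (a common eigenvector of the wedge powers need not be decomposable) is also accurate.
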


%In order to state the second result, we need to introduce a combinatorial construction called \emph{the compound matrices} (see \cite{george-ikramov}). Consider an integer $1\leq k \leq d$ and a matrix $A \in \M_d(\C)$. Let $n=\binom{d}{k}$ and define $C_k(A)$, the $k$-th compound matrix of $A$, as the $n \times n$ complex matrix indexed by strictly increasing sequences of integers $\alpha, \beta$ from $\{1, \ldots , d\}$ with entries given by
%\[C_k(A)_{\alpha, \beta} = \det A[\alpha | \beta],\]
%where $A[\alpha | \beta]$ is the submatrix of $A$ with rows indexed by $\alpha$ and columns indexed by $\beta$. We now state the main result of \cite{george-ikramov} as the following proposition.
%
%\begin{prop}\label{prop:shemesh_gen}
%Let $A, B \in \M_d(\C)$ be two complex matrices. If $A$ and $B$ have a common invariant subspace of dimension $k$ (for $1 \leq k \leq d-1$), then their $k$-th compound matrices have a common eigenvector, and hence (we put $n=\binom{d}{k}$)
%\[ \bigcap_{i,j=1}^{n-1} \ker [C_k(A)^i,C_k(B)^j] \neq \{0\},\]
%or, equivalently,
%\[\det \sum_{i,j=1}^{n-1} [C_k(A)^i,C_k(B)^j]^* \cdot [C_k(A)^i,C_k(B)^j] = 0.\]
%\end{prop}
\begin{rem}
The preceding conditions turn out to be sufficient under more stringent assumptions on the matrices $A$ and $B$ (see \cite{george-ikramov} for further details).
\end{rem}

The main point of the two preceding results is that there exists an universal polynomial $P \in \R[X_1, \ldots, X_{4d^2}]$ with the property that whenever two matrices $A=(a_{ij})$ and $B=(b_{kl})$ have a non-trivial common invariant subspace, $P(\Re a_{ij},\Im a_{ij},\Re b_{kl}, \Im b_{kl}) = 0$. This fact (together with Proposition \ref{prop:irred_lat}) will be useful later in this work, when we shall show that a generic class of quantum maps are irreducible.

\section{Non-random repeated interactions and a new model of random density matrices}\label{sec:fixed}

In this section we consider repeated interactions with a fixed unitary matrix $U$ ($\forall n, U_n = U$) and fixed state of the environment $\beta$ ($\forall n, \beta_n = \beta$). By the results of the previous section, the recurrence relation which governs the discrete, deterministic dynamics is
\[\rho_{n+1}=\Phi(\rho_n) = \trace_{\K}\left[U(\rho_n\otimes\beta)U^*\right].\]
Iterating this formula, one obtains the state of the system after $n$ interactions:
\[\rho_n = \Phi^n(\rho_0),\]
where $\rho_0$ was the initial state of the system. There is one obvious situation in which the asymptotic properties of the sequence $(\rho_n)_n$ can be established. Indeed, from Lemma \ref{lem:channel}, one knows that all quantum channels have eigenvalue 1 and that all other eigenvalues have module less than 1. Let $\mathcal C$ be the set of all quantum channels that have 1 as a simple eigenvalue and all other eigenvalues are contained in the \emph{open} unit disc. Since 1 is a simple eigenvalue, $\Phi$ has an unique fixed point which is (by  Lemma \ref{lem:channel}) a density matrix $\rho_\iy \in \MD_d(\C)$. Using the Jordan form of $\Phi$, one can show the following result (\cite{terhal}). 
\begin{prop}\label{prop:conv_1_unique}
Let $\Phi \in \mathcal C$ be a fixed quantum channel. Then, for all $\rho_0 \in \MD_d(\C)$,
\[\lim_{n \to \iy} \Phi^n(\rho_0) = \rho_\iy,\]
where $\rho_\iy$ is the unique invariant state of $\Phi$.
\end{prop}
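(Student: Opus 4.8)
The plan is to exploit the Jordan decomposition of $\Phi$, regarded as a linear endomorphism of the $d^2$-dimensional complex vector space $\M_d(\C)$. Decompose $\M_d(\C) = E_1 \oplus F$, where $E_1$ is the generalized eigenspace of $\Phi$ associated with the eigenvalue $1$ and $F$ is the direct sum of the generalized eigenspaces for all the other eigenvalues; both subspaces are $\Phi$-invariant. Since by hypothesis $\Phi \in \mathcal C$, the eigenvalue $1$ is simple, so $E_1 = \C\rho_\iy$ with $\Phi|_{E_1} = \mathrm{id}$, where $\rho_\iy$ denotes the unique invariant state of $\Phi$ (it is indeed a density matrix by Lemma~\ref{lem:channel}).

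The key step is to show that $\Phi^n|_F \to 0$. By construction the spectrum of $\Phi|_F$ is $\sigma(\Phi) \setminus \{1\}$, which, by definition of $\mathcal C$, is contained in the open unit disc; hence the spectral radius $r := r(\Phi|_F)$ satisfies $r < 1$. Writing $\Phi|_F$ in Jordan normal form, one gets an elementary bound $\norm{\Phi^n|_F} \leq C (1+n)^{d^2} r^n$ for a suitable constant $C$, and the right-hand side tends to $0$ as $n \to \iy$ (equivalently, one may invoke Gelfand's spectral radius formula $\limsup_n \norm{\Phi^n|_F}^{1/n} = r < 1$).

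Now fix an arbitrary $\rho_0 \in \MD_d(\C)$ and decompose it along $E_1 \oplus F$ as $\rho_0 = c\,\rho_\iy + w$ with $c \in \C$ and $w \in F$. Applying $\Phi^n$ and using $\Phi\rho_\iy = \rho_\iy$ together with the previous step gives $\Phi^n(\rho_0) = c\,\rho_\iy + \Phi^n(w) \to c\,\rho_\iy$. To identify the constant $c$, note that $\Phi$ is trace preserving, so $\trace[\Phi^n(\rho_0)] = \trace[\rho_0] = 1$ for every $n$; passing to the limit yields $c\,\trace[\rho_\iy] = c = 1$. Hence $\Phi^n(\rho_0) \to \rho_\iy$, independently of $\rho_0$, which is the claim.

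No genuine obstacle arises here; the only point deserving a word of justification is the decay $\Phi^n|_F \to 0$, which is immediate from the fact that the restricted spectral radius is strictly less than $1$. One could alternatively phrase the whole argument via the Riesz spectral projection $P$ onto $E_1$: it has rank one, commutes with $\Phi$, equals the map $X \mapsto \trace[X]\,\rho_\iy$ (by the trace-preservation computation above), and $\Phi^n = P + \Phi^n\circ(\mathrm{id} - P) \to P$ in operator norm.
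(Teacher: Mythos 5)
Your proof is correct and follows exactly the route the paper indicates (the paper gives no details, merely remarking that the result follows ``using the Jordan form of $\Phi$'' and citing \cite{terhal}). Your Jordan-block decomposition $\M_d(\C) = E_1 \oplus F$, the decay $\Phi^n|_F \to 0$ from the spectral radius being strictly less than $1$, and the identification of the coefficient $c=1$ via trace preservation constitute a complete and faithful filling-in of that argument.
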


The importance of the peripheral spectrum of a quantum channel is illustrated in the following example. 
\begin{eg}\label{eg:periph}
Consider the following channel $\Phi : \M_2(\C) \to \M_2(\C)$
\[\Phi(X) = \frac 1 2 \sigma_1 X \sigma_1 + \frac 1 2 \sigma_3 X \sigma_3,\]
where the Pauli matrices are given by
\[\sigma_0 = \I =\begin{bmatrix}
1&0\\
0&1
\end{bmatrix}, \quad \sigma_1=\begin{bmatrix}
0&1\\
1&0
\end{bmatrix}, \quad \sigma_2=\begin{bmatrix}
0&-i\\
i&0
\end{bmatrix}, \quad \sigma_3=\begin{bmatrix}
1&0\\
0&-1
\end{bmatrix}.\]
Direct computation shows that $\Phi(\I) = \I$, $\Phi(\sigma_2) = -\sigma_2$ and $\Phi(\sigma_1) = \Phi(\sigma_3) = 0$. Hence, the peripheral spectrum of $\Phi$ has 2 simple eigenvalues, $1$ and $-1$. However, for $\rho_0 = 1/2 (\I + \sigma_2) \in \MD_2(\C)$, one has
\[ \Phi^n(\rho_0) = \frac 1 2 \left(\I + (-1)^n \sigma_2 \right),\]
which does not converge in the limit $n \to \iy$. Hence, the simplicity of the eigenvalue $1$ does not suffice to have convergence to the invariant state. Note also that the channel $\Phi$ is irreducible, since $\sigma_1$ and $\sigma_3$ do not have any common non-trivial invariant subspaces.
\end{eg}
%In light of this observation, one is interested in finding conditions for $\Phi \in \mathcal C$. One may try a na�ve approach, based on strict positivity. 
%
% Writing the unitary matrix $U$ in a basis which diagonalises $\beta$, we get
% $$\rho_{k+1} = \sum_{j=1}^m p_j \sum_{i=1}^m L_{ij} \rho_k L_{ij}^*,$$
% or, in terms of the vectors obtained by reshapeing the matrices, 
% $$\Reshape(\rho_{k+1}) = \sum_{j=1}^m p_j \sum_{i=1}^m \overline{L_{ij}} \otimes L_{ij} \Reshape(\rho_k).$$
% This is a linear recurrence relation of the type $X_{k+1} = A \cdot X_k$, where $X_k = \Reshape(\rho_k) \in \C^{n^2}$ and $A = \sum_{j=1}^m p_j \sum_{i=1}^m \overline{L_{ij}} \otimes L_{ij} \in \M_{n^2}(\C)$. Asymptotic results for the sequence $(\rho_k)_k$ are obtained from the spectral profile of the matrix $A$ in the next part. In the second part we show that for a generic interaction unitary $U$, the state of the system converges to an equilibrium state $\rho_\iy$. We then study the probability measure uniform unitaries induce on the set of density matrices. 
% 
% \subsection{Limit theorems}
% It was shown in the previous section that there are two points of view on the discrete evolution of the state of the ``little'' system. There is the commonly used in quantum information theory point of view of a completely positive linear map
% $$ \rho_{k+1} = \Phi(\rho_k), \quad   \Phi : \M_d(\C) \to \M_d(\C), \quad \Phi(\D_d) \subset \D_d,$$
% and the``vectorialized'' point of view
% $$X_{k+1} = A \cdot X_k, \quad A \in \M_{n^2}(\C).$$
% 
% \subsection{The induced measure}

Let us now show that the class $\mathcal{C}$ of quantum channels which have 1 as an unique peripheral eigenvalue is generic in a certain sense. To this end, we shall introduce a model of \emph{random} quantum channel, based on the Stinespring decomposition. To start, fix the dimension of the environment $d'$ and a state $\beta \in \MD_{d'}(\C)$. Next, consider an unitary random matrix $U$ distributed along the (uniform) Haar measure $\Haar_{dd'}$ on $\U(dd')$. To the state $\beta$ and the evolution operator $U$, we associate the quantum channel $\Phi_{U,\beta}$. In this way, we define a model of random quantum channels by considering the image measure of the Haar distribution $\Haar_{dd'}$ on the set of quantum channels. In the recent preprint \cite{zyc_rqo}, the authors study a similar model of random quantum channels, focusing on the spectral properties of the random matrix defining the channel.

More precisely, we claim that if the state of the environment $\beta$ is fixed and the interaction unitary $U \in \U(dd')$ is chosen randomly with the uniform Haar distribution $\Haar_{dd'}$, then, with probability one, the channel $\Phi^{U,\beta}$ admits 1 as the unique eigenvalue on the unit circle. Here we need another fact from algebraic geometry, summarized in the following lemma (for a similar result, one should have a look at Proposition 2.6 of \cite{arveson}).

\begin{lem}\label{lem:unitary_poly}
Given a polynomial $P \in \R[X_1, \ldots, X_{2d^2}]$, the set 
\[Z = \{U=(u_{ij}) \in \U(d) \, | \, P(\Re u_{ij}, \Im u_{ij})=0\}\]
is either equal to the whole set $\U(d)$ or it has Haar measure 0.
\end{lem}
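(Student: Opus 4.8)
The plan is to exploit the fact that $\U(d)$, viewed as a real algebraic subvariety of $\R^{2d^2}$, is connected and (as a smooth manifold) irreducible as a real variety. First I would note that the zero set $Z$ is the intersection of $\U(d)$ with the real hypersurface $\{P(\Re u_{ij}, \Im u_{ij}) = 0\}$, so $Z$ is a real-analytic subset of the compact connected real-analytic manifold $\U(d)$. If $P$ vanishes identically on $\U(d)$ then $Z = \U(d)$ and we are in the first alternative. Otherwise $P$ restricted to $\U(d)$ is a non-trivial real-analytic function on a connected real-analytic manifold, and its zero set is a proper closed subset.

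The key step is then to argue that a proper real-analytic subset of a connected real-analytic manifold has measure zero with respect to any smooth (in particular Haar) measure. One clean way: cover $\U(d)$ by finitely many connected coordinate charts; on each chart the pulled-back function $\tilde P$ is real-analytic, and by the identity theorem for real-analytic functions, either $\tilde P \equiv 0$ on that chart or its zero set has Lebesgue measure zero in the chart. Since $\U(d)$ is connected, if $\tilde P \equiv 0$ on one chart then $P \equiv 0$ on all of $\U(d)$ (analytic continuation along chart overlaps), contradicting our assumption; hence on every chart the zero set is Lebesgue-null, and Haar measure is mutually absolutely continuous with Lebesgue measure in each chart (the Haar density is smooth and positive). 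Summing over the finite cover gives $\Haar_d(Z) = 0$.

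An alternative, perhaps more self-contained, route uses the exponential chart: $\U(d)$ is the image under $H \mapsto e^{iH}$ of the space of Hermitian matrices, and this map is real-analytic and a local diffeomorphism near $0$, with $\{e^{iH} : \norm{H} < \pi\}$ a dense open subset of full Haar measure. Then $Q(H) := P(\Re (e^{iH})_{ij}, \Im(e^{iH})_{ij})$ is a real-analytic function of the real entries of $H$ on a connected open set, so either $Q \equiv 0$ — in which case $P$ vanishes on a dense subset of $\U(d)$, hence on all of $\U(d)$ by continuity — or $\{Q = 0\}$ is Lebesgue-null, and pushing forward and comparing with Haar gives the claim.

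The main obstacle is purely bookkeeping: making precise the passage from "analytic and not identically zero on a connected manifold" to "measure zero", i.e. handling the chart overlaps and the absolute continuity of Haar measure against Lebesgue measure in charts. The underlying analytic facts (identity theorem for real-analytic functions, the zero set of a non-trivial real-analytic function being null) are standard, so no genuinely hard estimate is involved; the only subtlety is ensuring connectedness is used correctly to rule out the function vanishing on one chart but not globally, which forces the dichotomy $Z = \U(d)$ or $\Haar_d(Z)=0$ with nothing in between.
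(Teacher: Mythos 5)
Your proof is correct, but it follows a genuinely different route from the paper's. The paper treats $Z$ algebraically: it notes that $\U(d)$ is an \emph{irreducible} real algebraic set (deduced from its connectedness and the structure theory of linear algebraic groups, citing Humphreys), so that if the polynomial $P$ does not vanish identically on $\U(d)$ then $Z$ is a proper subvariety of strictly smaller dimension than $d^2$, and a lower-dimensional set is null for any measure with a density in local coordinates, such as Haar measure. You instead work real-analytically: you only use that $\U(d)$ is a connected real-analytic manifold, that $P$ restricts to a real-analytic function in charts, and the standard facts that (i) the zero set of a not-identically-zero real-analytic function on a connected open set is Lebesgue-null and (ii) vanishing on an open subset propagates to the whole connected manifold by the identity theorem; the comparison with Haar measure via its smooth positive density in charts is the same final step as in the paper. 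Both arguments hinge on the same two inputs --- connectedness of $\U(d)$ to force the dichotomy, and local absolute continuity of Haar measure --- but yours replaces the algebraic-geometric machinery (irreducibility of the variety $\U(d)$ and dimension theory of real algebraic sets) with the identity theorem for real-analytic functions. What each buys: the paper's version is shorter if one is willing to quote the algebraic-geometry facts, and it yields the stronger conclusion that $Z$ has smaller dimension, not merely measure zero; yours is more self-contained and elementary, needing only classical real analysis, and it would apply verbatim to any real-analytic (not necessarily polynomial) function $P$. Your exponential-chart variant is also sound: the map $H \mapsto e^{iH}$ on $\{\norm{H}_{op} < \pi\}$ is indeed an analytic diffeomorphism onto the full-measure open set of unitaries without eigenvalue $-1$, which reduces everything to a single connected chart and sidesteps the chart-overlap bookkeeping.
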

\begin{proof}
We start by noticing that the real algebraic set $\U(d)$ is irreducible. This follows from the connectedness of $\U(d)$ (in the usual topology) and from the fact that irreducible components of a linear algebraic group are disjoint (\cite{humphreys}, 7.3). The set $Z$ is the intersection of the irreducible variety $\U(d)$ with the variety $V$ of zeros of the polynomial $P$. If $\U(d) \subset V$, then $Z = \U(d)$; otherwise, the dimension of $Z$ is strictly smaller than $d^2$, the real dimension of $\U(d)$. Since the Haar measure is just the integration of an invariant differential form, it has a density in local coordinates (\cite{faraut}, Ch. 5) and hence $\Haar_d(Z) = 0$ in this case.
\end{proof}

\begin{thm}\label{thm:haar_1_unique}
Let $\beta$ be a fixed density matrix of size $d'$. If $U$ is a random unitary matrix distributed along the Haar invariant probability $\Haar_{dd'}$ on $\U(dd')$, then $\Phi^{U, \beta} \in \mathcal C$ almost surely.
\end{thm}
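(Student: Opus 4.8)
The plan is to show that the two defining properties of membership in $\mathcal{C}$ — namely, that $1$ is a \emph{simple} eigenvalue of $\Phi^{U,\beta}$, and that $1$ is the \emph{only} eigenvalue on the unit circle — each hold for $U$ outside a Haar-null set, and then intersect the two full-measure sets. The key observation is that both properties can be detected by non-vanishing of explicit polynomials in the real and imaginary parts of the entries of $U$, so that Lemma \ref{lem:unitary_poly} applies: it suffices to exhibit a \emph{single} unitary $U_0 \in \U(dd')$ for which $\Phi^{U_0,\beta} \in \mathcal{C}$, because then the relevant polynomial does not vanish identically on $\U(dd')$, hence vanishes only on a set of Haar measure $0$.

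The main work is therefore twofold. First, I would translate ``$\Phi^{U,\beta}\in\mathcal{C}$'' into the language of polynomials. Using Eq.~(\ref{eq:Kraus_from_U}), the matrix of $\Phi^{U,\beta}$ acting on $\M_d(\C) \cong \C^{d^2}$ has entries that are polynomials (indeed quadratic) in $\Re u_{ij}, \Im u_{ij}$, since the Kraus operators are the rescaled blocks $\sqrt{b_j}\,U_{ij}$. Hence the characteristic polynomial $\chi_{\Phi^{U,\beta}}(t)$ has coefficients that are polynomials in these real variables. The condition that $1$ is a simple root is that $\chi_{\Phi^{U,\beta}}(1)=0$ (automatic, by Lemma \ref{lem:channel}) but $\chi_{\Phi^{U,\beta}}'(1)\neq 0$; the condition that no other peripheral eigenvalue exists is more delicate, but can be encoded by requiring that the polynomial $\chi_{\Phi^{U,\beta}}(t)/(t-1)$ have no root on the unit circle, equivalently that its resultant with a suitable parametrization of the circle, or the resultant of $q(t) = \chi_{\Phi^{U,\beta}}(t)/(t-1)$ with $t^{d^2-1}\,\overline{q(1/\bar t)}$ (the ``reciprocal'' polynomial, whose common roots with $q$ on $|t|=1$ we want to exclude), be non-zero. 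All of these are polynomial conditions in the entries, so their simultaneous non-vanishing defines the complement of a set of the form $\{P = 0\}$ for a single real polynomial $P$ (take the product of the individual polynomials). By Lemma \ref{lem:unitary_poly}, this complement is either empty or of full Haar measure.

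The second, and genuinely substantive, step is to produce one witness $U_0$ with $\Phi^{U_0,\beta}\in\mathcal{C}$. A natural candidate is to choose $U_0$ so that the resulting channel is as strongly mixing as possible; for instance, one can try $U_0$ such that $\Phi^{U_0,\beta}$ is a (strictly positive) channel close to the completely depolarizing channel $X \mapsto \trace[X]\,\I_d/d$, which has $1$ as a simple eigenvalue and all other eigenvalues equal to $0$. Concretely, with $d' \geq d^2$ (which we may assume, since enlarging the ancilla and padding $\beta$ with zero eigenvalues does not change the class of achievable channels, and one checks directly that a $\beta$ of lower rank can be handled by restricting to its support), one can use a Stinespring dilation of the depolarizing channel and then perturb: since having a simple eigenvalue at $1$ and no other peripheral eigenvalue is an \emph{open} condition on $\Phi$ (eigenvalues depend continuously on the matrix), any channel sufficiently close to the depolarizing channel lies in $\mathcal{C}$, and such channels are realized by unitaries $U_0$ near the dilating unitary. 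One must be slightly careful that the map $U \mapsto \Phi^{U,\beta}$ is surjective onto a neighbourhood of the depolarizing channel in channel-space, but this follows from the Stinespring/Kraus correspondence together with the fact that the depolarizing channel has Choi rank $d^2 \leq d'$ (adjusting for the rank of $\beta$ as above), so it admits a Stinespring representation with ancilla of the given dimension.

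The step I expect to be the main obstacle is the ``no other peripheral eigenvalue'' part — both in finding the clean polynomial encoding (the resultant-with-reciprocal-polynomial trick needs to be stated carefully so that it genuinely characterizes absence of unimodular roots of $q$, distinguishing this from roots at $t=1$ which have already been factored out) and, more importantly, in being sure that a single witness $U_0$ exists with \emph{exactly} this property rather than merely a simple eigenvalue at $1$. The openness argument around the depolarizing channel resolves this cleanly, so the real content is just checking realizability of (a perturbation of) the depolarizing channel as $\Phi^{U_0,\beta}$ for the prescribed $\beta$; the rest is routine once Lemma \ref{lem:unitary_poly} is invoked.
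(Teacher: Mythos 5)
Your overall strategy --- encode membership in $\mathcal C$ by the non-vanishing of polynomials in the entries of $U$, invoke Lemma \ref{lem:unitary_poly}, and exhibit a witness --- is indeed the skeleton of the paper's argument. The genuine gap is the witness. Your candidate is a unitary dilating (a perturbation of) the completely depolarizing channel, and to make it fit you ``assume $d'\geq d^2$, since enlarging the ancilla and padding $\beta$ with zero eigenvalues does not change the class of achievable channels.'' This reduction is not valid: the theorem fixes $d'$ and $\beta$, and the class of channels $\Phi^{U,\beta}$ genuinely depends on them. By Eq.~(\ref{eq:Kraus_from_U}) such a channel has at most $d'\cdot\mathrm{rank}(\beta)$ Kraus operators, hence Choi rank at most $d'\cdot\mathrm{rank}(\beta)$; the depolarizing channel has Choi rank $d^2$, and since the rank of the Choi matrix is lower semicontinuous, no channel in a neighbourhood of it is achievable when $d'\cdot\mathrm{rank}(\beta)<d^2$ (e.g.\ $d=3$, $d'=5$, $\beta$ pure --- a case the theorem covers and the paper even plots). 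Moreover, enlarging the ancilla changes the pushforward of the Haar measure, so the statement for large $d'$ does not imply the statement for the given $d'$. Splitting your polynomial $P$ into factors and finding separate witnesses does not rescue the hard factor either: for the resultant condition the cheap witness $U=\I$ gives $q(t)=(t-1)^{d^2-1}$, all of whose roots lie on the unit circle, so the resultant vanishes there.

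The paper avoids ever having to produce a single $U_0$ with $\Phi^{U_0,\beta}\in\mathcal C$. It first shows that $\Phi^{U,\beta}$ is almost surely \emph{irreducible}, applying the generalized Shemesh criterion to two blocks $U_{1j},U_{2j}$ of $U$; the witness for each polynomial $P_k$ is merely a unitary two of whose blocks share no $k$-dimensional invariant subspace, which is easy to build for the given $d'$ and $\beta$. Irreducibility then buys, via the cited theorem on irreducible Schwarz maps, that $1$ is a simple eigenvalue and that the peripheral spectrum is a finite \emph{subgroup} of $\T$, hence contained in the roots of unity of order at most $d^2$. The remaining condition becomes a finite list of polynomial conditions $Q_\lambda(U)=\det[\Phi^{U,\beta}-\lambda\I]\neq0$, one for each such root $\lambda\neq1$, and for these $U=\I$ \emph{is} a valid witness (the identity channel has no eigenvalue other than $1$), even though $\I$ itself is far from lying in $\mathcal C$. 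To repair your proof you should either adopt this decomposition, or construct for arbitrary $d\geq2$, $d'\geq2$ and $\beta$ an explicit $U_0$ whose channel has $1$ as its only peripheral eigenvalue and, for your resultant encoding, no pair of eigenvalues related by inversion in the circle; neither is supplied by your current argument.
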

\begin{proof}
The proof goes in two steps. First, we show that $\Phi^{U, \beta}$ is almost surely irreducible and then we conclude by a simple probabilistic argument. 

Let us start by applying Lemma \ref{lem:unitary_poly} to show that a random quantum channel is almost surely irreducible. To this end, using Eq.~(\ref{eq:Kraus_from_U}), we obtain a set of Kraus operators for $\Phi^{U, \beta}$ which are sub-matrices of $U \in \U(dd')$. Consider two such Kraus operators $A, B \in \M_d(\C)$ (choose $j$ such that $b_j \neq 0$ and take $A = U_{1j}$, $B=U_{2j}$). Using Proposition \ref{prop:irred_lat}, to show irreducibility it suffices to see that $A$ and $B$ do not have a non-trivial common invariant subspace. Let $1 \leq k \leq d-1$ be the dimension of a potentially invariant common subspace of $A$ and $B$. By the criterion in Proposition \ref{prop:shemesh_gen}, there exists a polynomial $P_k$ in the entries of $A$ and $B$ (and thus in the entries of $U$) such that if $P_k(U)$ is non-zero, then $A$ and $B$ do not share a $k$-dimensional invariant space. Note that $P_k$ can not be identically zero: for two small enough matrices $\tilde A, \tilde B$ without common invariant subspaces, one can build a unitary matrix $\tilde U$ such that $\tilde A = \tilde U_{1j}$, $\tilde B=\tilde U_{2j}$. By the Lemma \ref{lem:unitary_poly}, $\Haar_{dd'}$-almost all unitary matrices $U$ give Kraus operators $A$ and $B$ that do not have any $k$-dimensional invariant subspaces in common. Since the intersection of finitely many full measure sets has still measure one, almost all quantum channels are irreducible.

Consider now a random channel $\Phi^{U, \beta}$ which we can assume irreducible. Since the peripheral spectrum of an irreducible channel is a multiplicative subgroup of the unit circle $\T$, it suffices to show that for all element $\lambda$ of the finite set $\{\xi \in \T | \exists 1 \leq n \leq d^2 \text{ s.t. } \xi^n = 1\} \setminus \{1\}$, with Haar probability one, $\lambda$ is not an eigenvalue of $\Phi^{U, \beta}$. We use the same trick as earlier. Consider such a complex number $\lambda$ and introduce the polynomial $Q_\lambda(U) = \det [\Phi^{U, \beta} - \lambda \I_{(dd')^2}]$, where $\Phi^{U, \beta}$ is seen as a matrix $\Phi^{U, \beta} \in \M_{(dd')^2}(\C)$. Since $\lambda \neq 1$ and the identity channel $\Phi^{U=\I, \beta}$ has only unit eigenvalues, $Q_\lambda(U)$ cannot be identically zero, and the conclusion follows.
\end{proof}

\begin{rem}
The main difficulty in the proof of the preceding result comes from the fact that the matrices $A$ and $B$ are ``correlated'': two blocks of an unitary matrix must satisfy norm and (maybe) orthogonality relations. Hence the need to use sophisticated geometric algebra techniques. Proving that two independent random Gaussian (or unitary) matrices do not share non-trivial invariant subspaces is much simpler and does not require the use of such techniques.
\end{rem}

We now move on and apply the previous results to constructing a new family of probability distributions on the set of density matrices. The main idea is to assign, whenever possible, to a random unitary $U \in \U(dd')$ its unique invariant density matrix $\rho_\iy$. In this way, the Haar measure $\Haar_{dd'}$ on the unitary group $\U(dd')$ is transported to the set of density matrices $\MD_d(\C)$.

Let us now make this construction more precise. The new family of probability measures shall be indexed by an integer $d' \geq 1$ (the dimension of the auxiliary system) and by a non-increasing probability vector $b=(b_1, \ldots, b_{d'}) \in \C^{d'}$: $b_1 \geq b_2 \geq \ldots \geq b_{d'} \geq 0$ and $\sum_i b_i = 1$ (these are the eigenvalues of the state of the auxiliary system). For such a couple $(d', b)$ consider a density matrix $\beta \in \MD_{d'}(\C)$ with eigenvalue vector $b$ (the eigenvectors of $\beta$ do not matter, see Lemma \ref{prop:asympt_measures}). As it follows from Proposition \ref{thm:haar_1_unique}, for almost all unitaries $U \in \U(dd')$, the channel $\Phi^{U, \beta}$ satisfies the hypotheses of Proposition \ref{prop:conv_1_unique}. Hence, for almost all $U$ and for all density matrices $\rho_0 \in \MD_d(\C)$, $\lim_{n \to \iy}(\Phi^{U, \beta})^n\rho_0 = \rho_\iy$, where $\rho_\iy$ is the unique invariant state of $\Phi^{U, \beta}$. We have defined almost everywhere an application 
\begin{align*}
\U(dd') &\to \MD_d(\C)\\
U &\mapsto \rho_\iy.
\end{align*}
We denote by $\nu_{b}$ the image measure of the Haar probability $\Haar_{dd'}$ on $\U(dd')$ by the previous application (notice that we dropped the integer parameter $d'$, since this is the dimension of the vector $b$). We call $\nu_b$ the \emph{asymptotic induced measure} on the set of density matrices.

We now motivate the term ``asymptotic induced'' in the previous definition by showing how the measures $\nu_b$ relate to the induced random density matrices considered in \cite{zyc_sommers, nechita}. Let us recall here how these measures are constructed and how one can sample from this distribution. The physical motivation behind the induced measures comes from the following setup. Assume that a system $\S$ is coupled to an environment $\mathcal E$ and that the whole is in a pure state $\psi \in \H \otimes \K$. If one has no \emph{a priori} knowledge about the state $\psi$, then it is natural to assume that $\psi$ is a random uniform element on the unit sphere of the product space $\H \otimes \K$. The distribution of the partial trace over the environment 
\[ \rho_1 = \trace_\K[\ketbra{\psi}{\psi}]\]
is called the \emph{induced measure} and it is denoted by $\mu_{d'}$ (the parameter $d' = \dim \K$ is the dimension of the environment). We refer the interested reader to \cite{nechita} for more information on these measures. Since the distribution of an uniform norm-one vector $\psi$ is the equal to the distribution of $U \psi_0$, where $\psi_0$ is any fixed norm-one vector and $U$ is a Haar unitary, $\mu_{d'}$ is also the distribution of the matrix
\[ \rho_1 = \trace_\K[U\ketbra{\psi_0}{\psi_0}U^*].\]
If one chooses $\psi_0 = e_1 \otimes f_1$, where $e_1$ and $f_1$ are the first vectors of the canonical basis of $\C^d$ and respectively $\C^{d'}$, then
\[ \rho_1 = \trace_\K[U(\rho_0 \otimes \beta_0)U^*] = \Phi^{U, \beta_0}(\rho_0),\]
with $\rho_0 = \ketbra{e_1}{e_1}$ and $\beta_0 = \ketbra{f_1}{f_1}$. Hence, the induced measure $\mu_{d'}$ is the distribution of the result of \emph{one} application of a random channel $\Phi^{U, \beta_0}$ on the constant matrix $\rho_0$: $\rho_1 \sim \mu_{d'}$. On the other hand, after a large number of identical interactions, one gets
\[ \rho_\iy = \lim_{n \to \iy} \left[\Phi^{U, \beta_0}\right]^n(\rho_0).\]
In this work we have shown that with $\Haar_{dd'}$-probability one, $\rho_\iy$ is a well defined density matrix-valued random variable which does not depend on the value of $\rho_0$. Since the eigenvalue vector of $\beta_0$ is $b_0 = (1, 0, \ldots, 0) \in \C^{d'}$ we have that $\rho_\iy \sim \nu_{b_0}$. Now, the relation between the two families of measures is clear: the induced measure $\mu_{d'}$ is the distribution of the density matrix after one interaction, whereas $\nu_{b_0}$ is the distribution at the limit, after a large number of interactions. The reader may notice that this analogy is valid only in the case where $b=(1, 0, \ldots, 0)$ (pure state on the environment). Generalizations of the (usual) induced measures to other environment states are possible, but out of the scope of the present work. To further compare the asymptotic and the one interaction induced measures, we plotted the spectra of samples of density matrices from both families in Figures \ref{fig:asympt} and \ref{fig:induced}. In particular, one should compare Figure \ref{fig:asympt:22} with Figure \ref{fig:induced:22} ($d=d'=2$), Figure \ref{fig:asympt:33} with Figure \ref{fig:induced:33} ($d=d'=3$) and Figure \ref{fig:asympt:35} with Figure \ref{fig:induced:35} ($d=3, d'=5$).

\begin{figure}[htb]
\centering
\subfigure[]{\label{fig:asympt:22}\includegraphics[width=4.5cm]{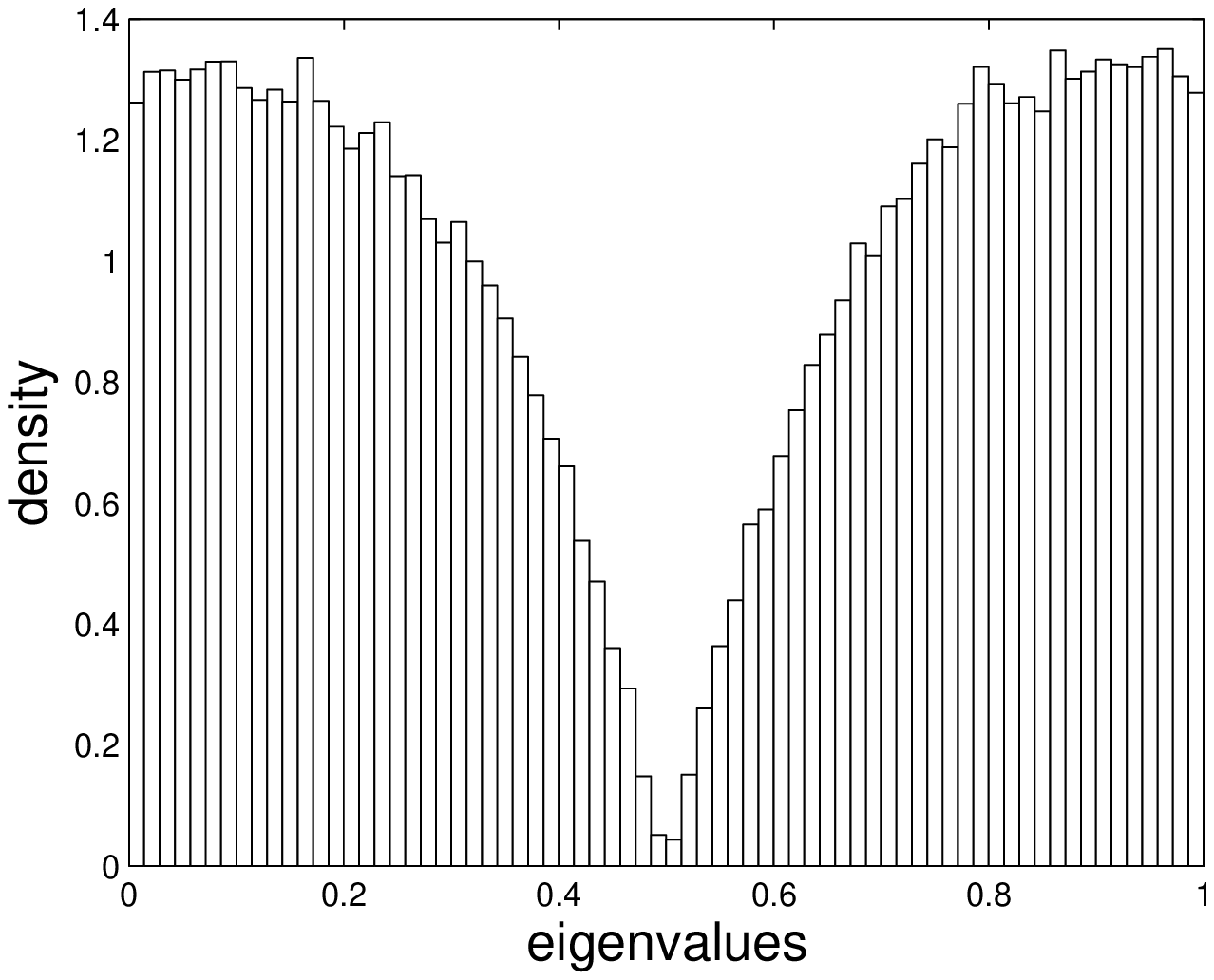}}
\subfigure[]{\includegraphics[width=4.5cm]{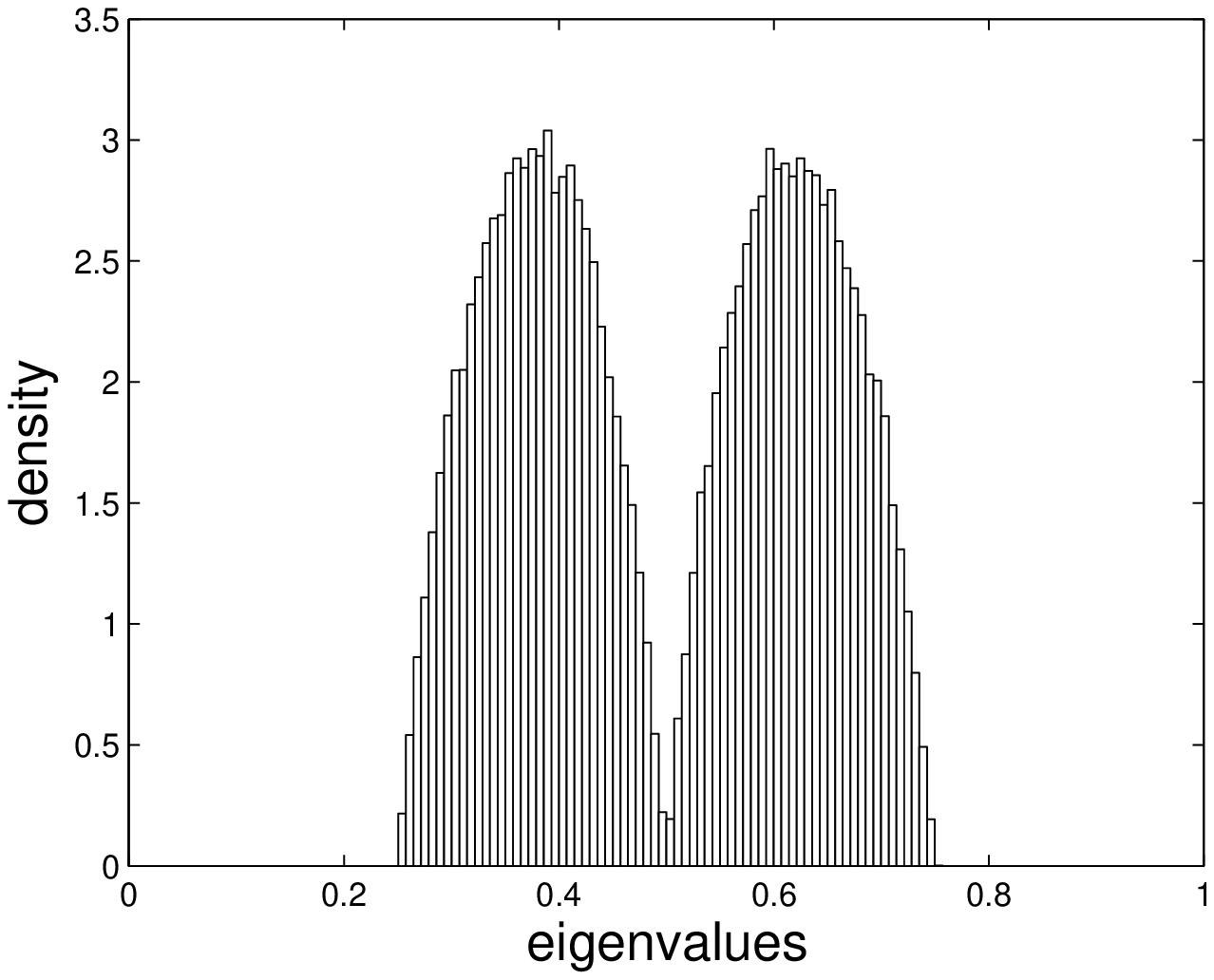}}
\subfigure[]{\includegraphics[width=4.5cm]{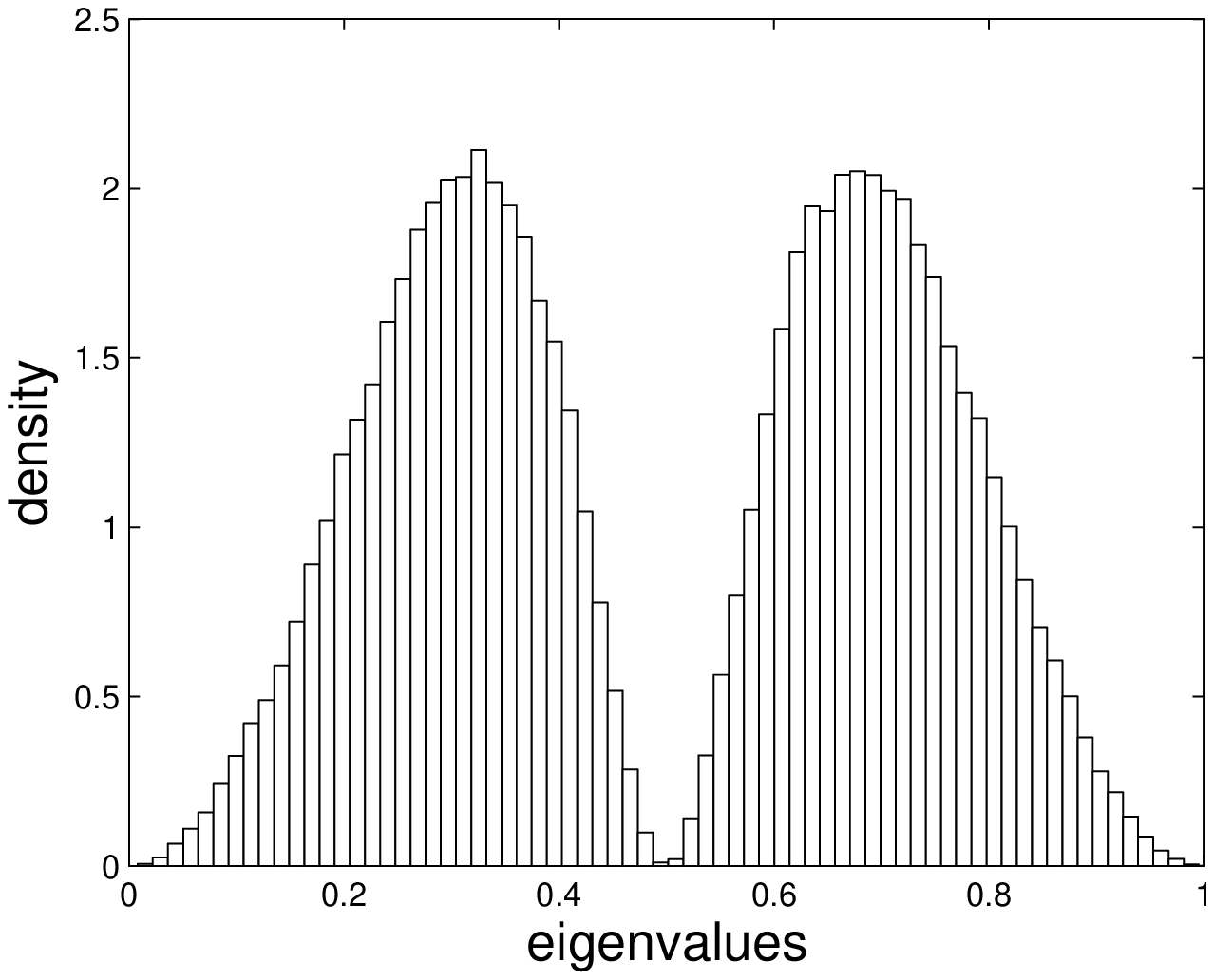}}\\
\subfigure[]{\label{fig:asympt:33}\includegraphics[width=4.5cm]{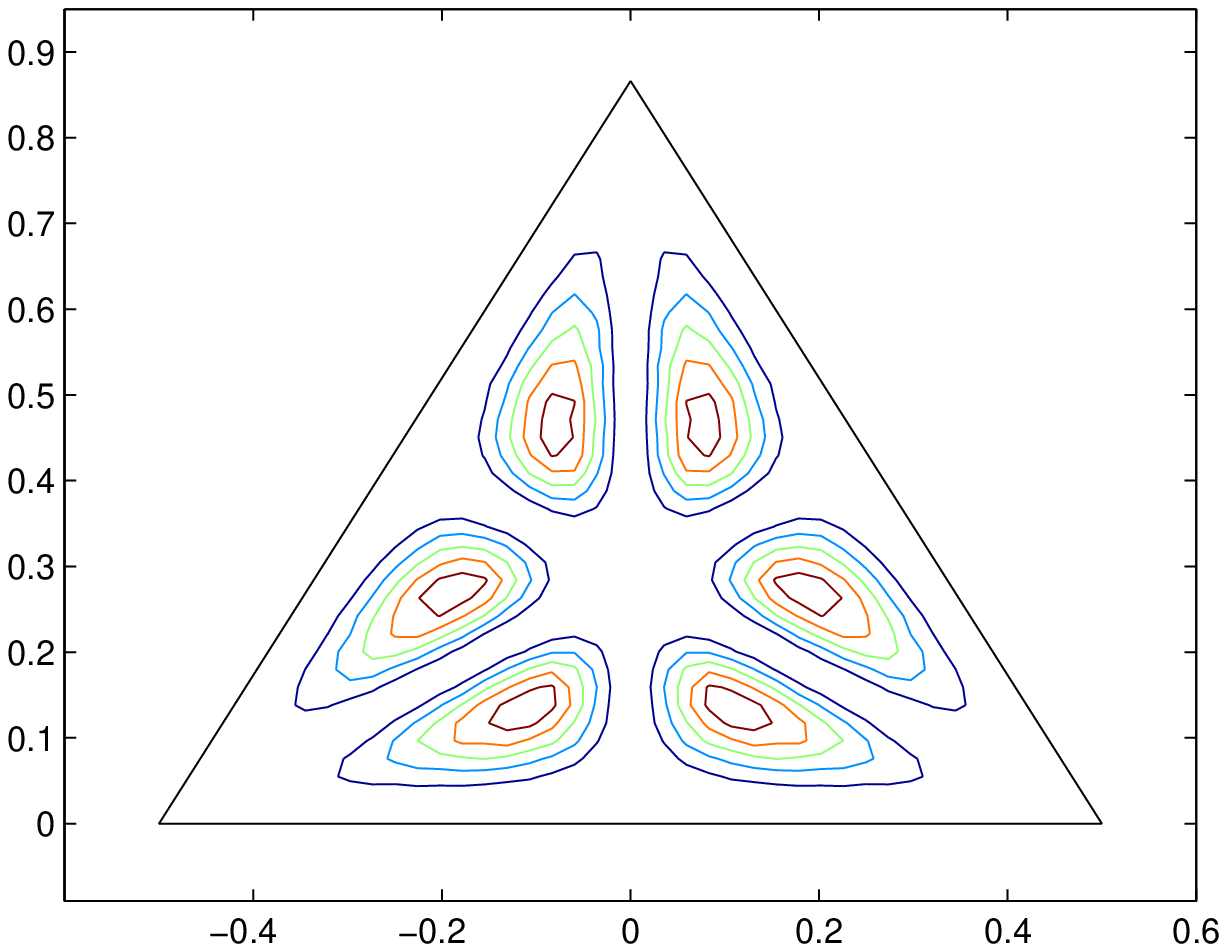}}
\subfigure[]{\includegraphics[width=4.5cm]{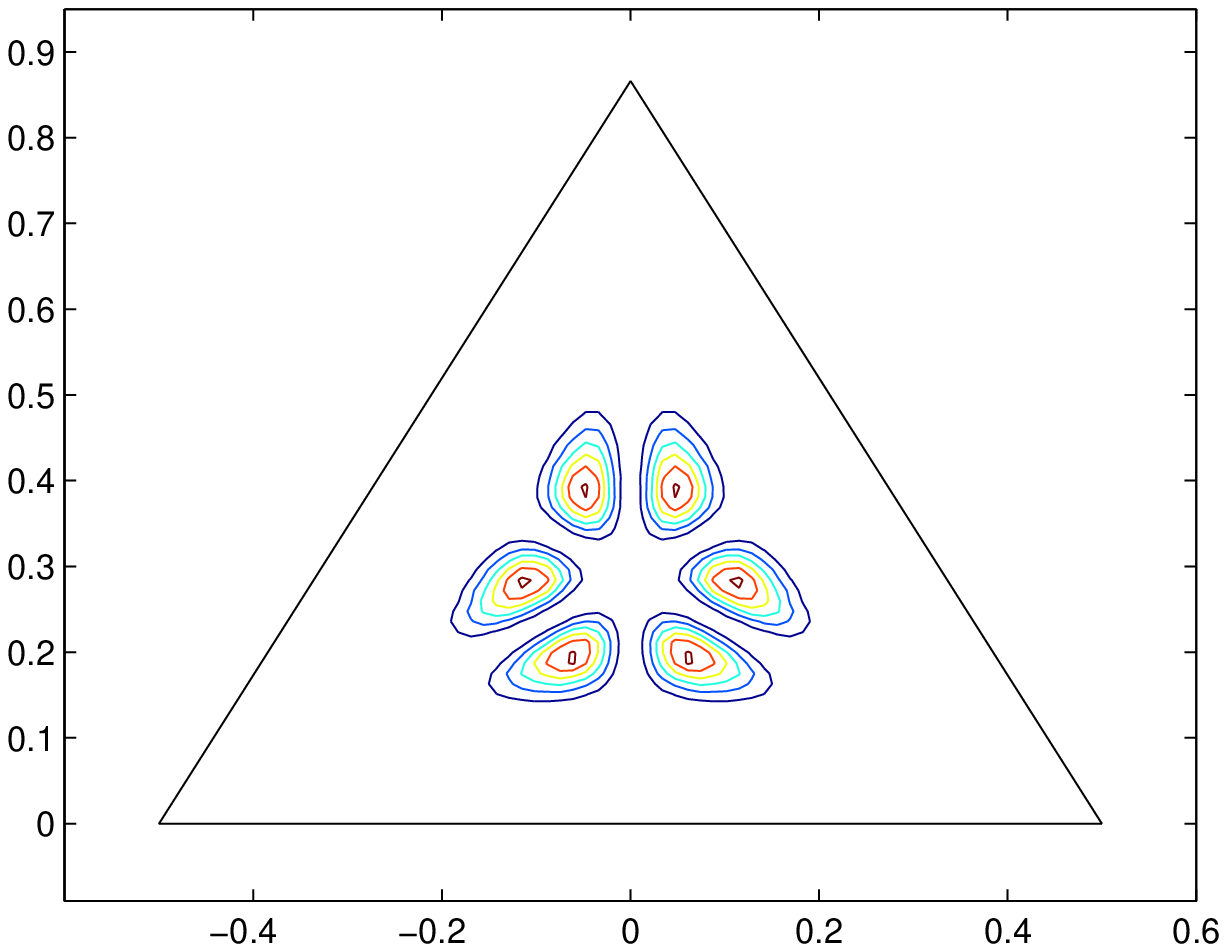}}
\subfigure[]{\label{fig:asympt:35}\includegraphics[width=4.5cm]{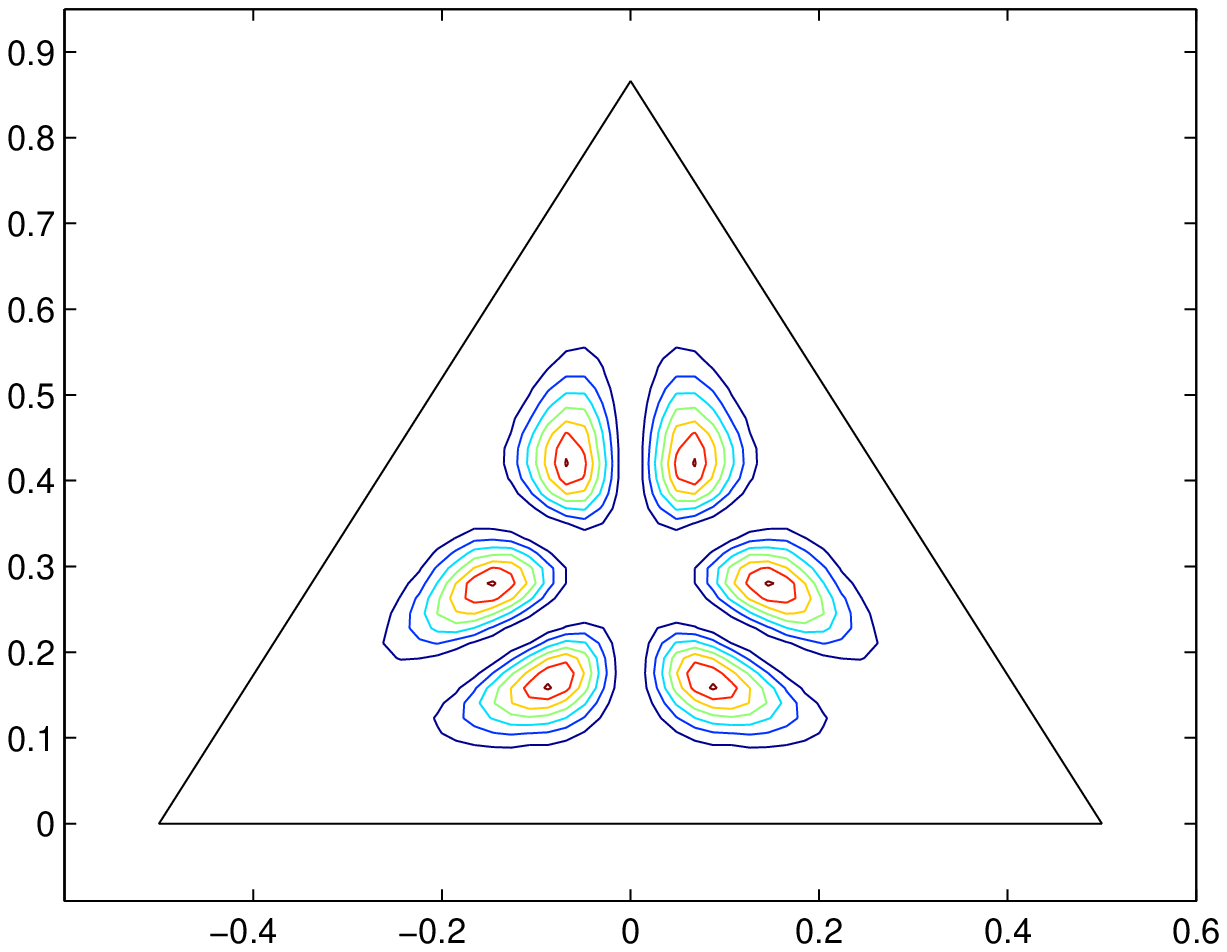}}
\caption{Asymptotic measure eigenvalue distribution. First row, from left to right: $(d=2,b=[1, 0])$, $(d=2,b=[3/4, 1/4])$, $(d=2,b=[1, 0, 0, 0])$. Second row: $(d=3,b=[1, 0, 0])$, $(d=3,b=[3/4, 1/8, 1/8])$ and $(d=3,b=[1, 0, 0, 0, 0])$.}
\label{fig:asympt}
\end{figure}

\begin{figure}[htb]
\centering
\subfigure[]{\label{fig:induced:22}\includegraphics[width=4.5cm]{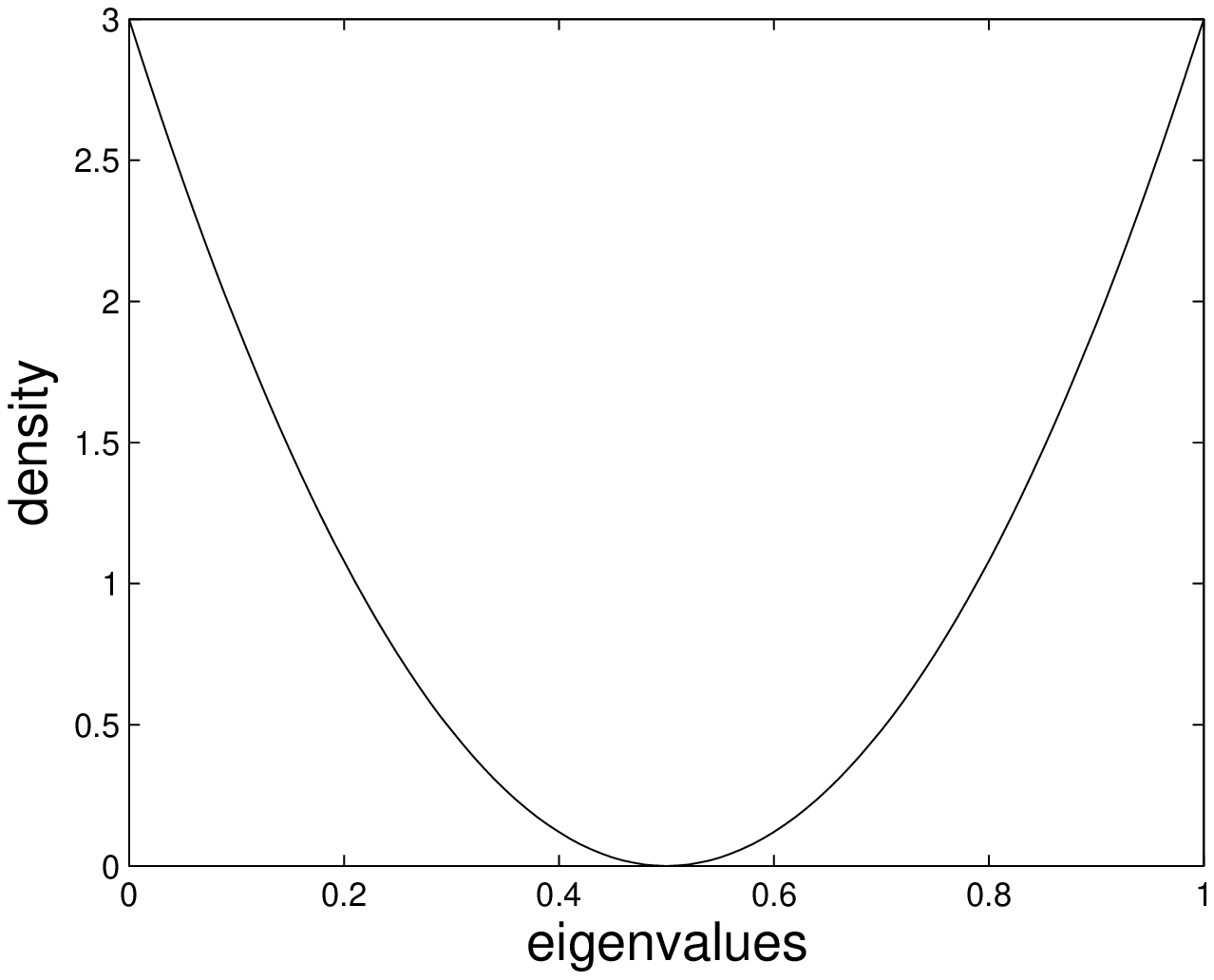}}
\subfigure[]{\label{fig:induced:33}\includegraphics[width=4.5cm]{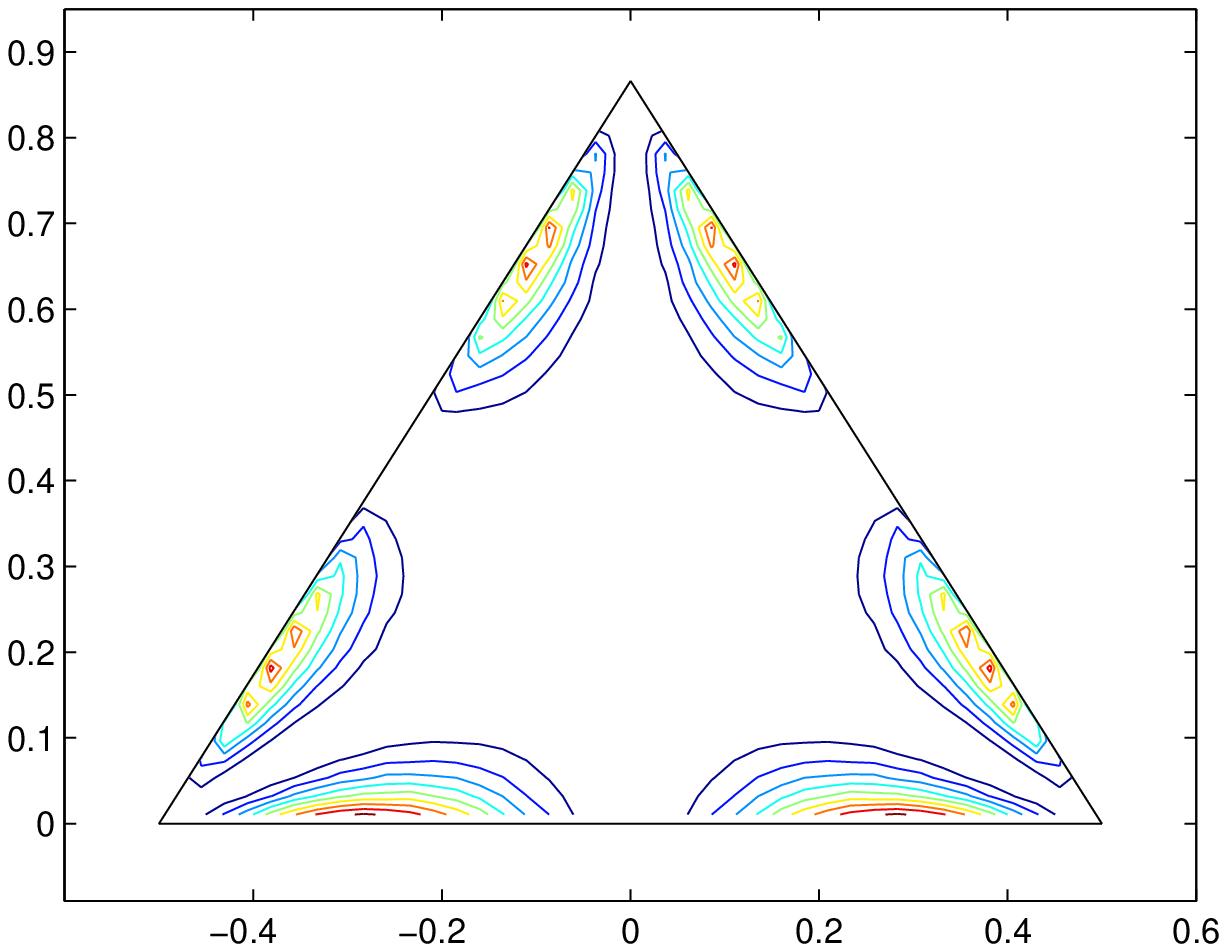}}
\subfigure[]{\label{fig:induced:35}\includegraphics[width=4.5cm]{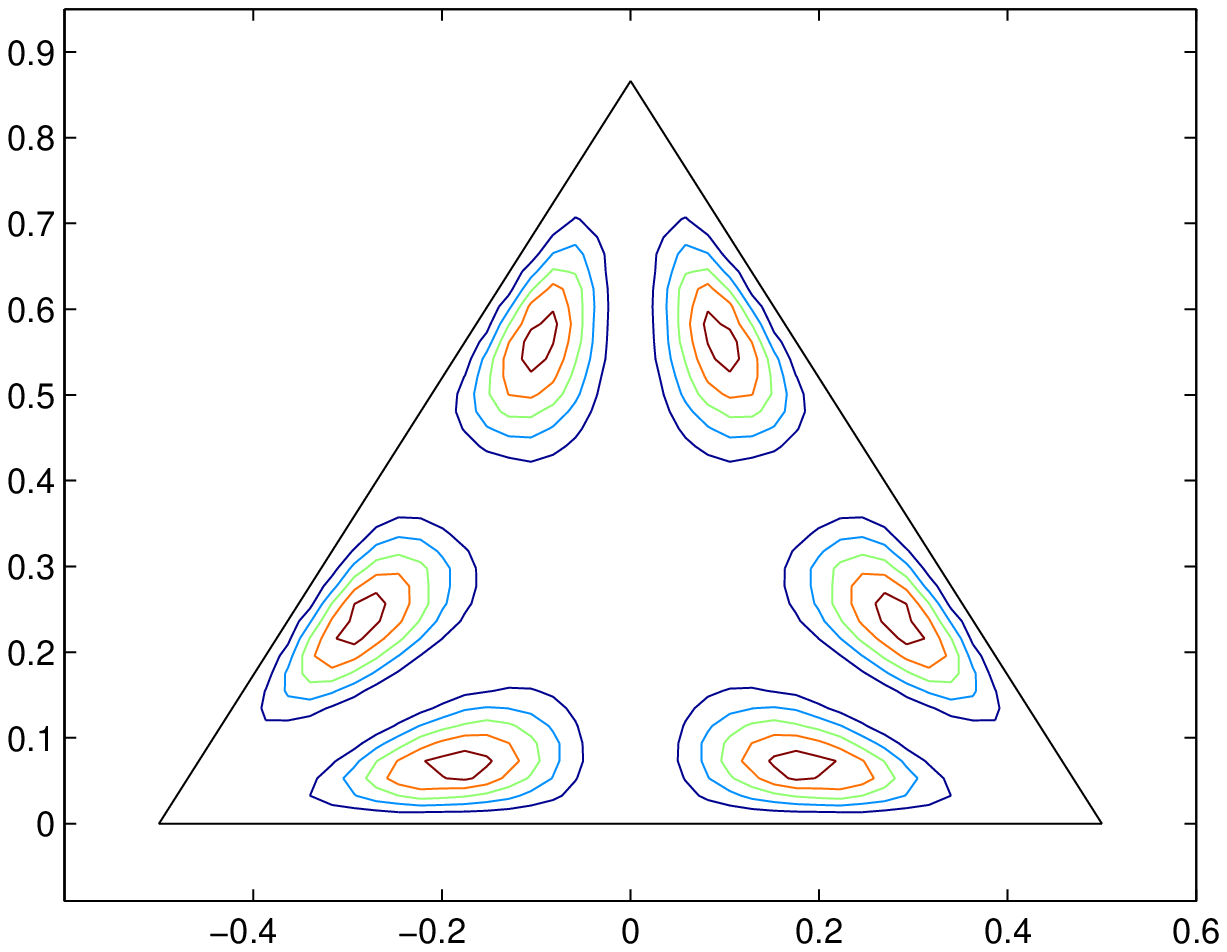}}
\caption{Induced measure eigenvalue distribution for $(d=2,d'=2)$, $(d=3,d'=3)$ and $(d=3,d'=5)$.}
\label{fig:induced}
\end{figure}

One particularly simple case is obtained by taking $b = (1/d', \ldots, 1/d')$. The measure $\nu_b$ is then trivial, being equal to the Dirac mass supported on the ``chaotic state'' $\I/d$. In the next lemma we prove some basic properties of the newly introduced measures $\nu_b$. A more thorough investigation of these measures is postponed to a later work.
\begin{prop}\label{prop:asympt_measures}
The probability measures $\nu_b$ have the following properties:
\begin{enumerate}
\item For every probability vector $b$, the measure $\nu_b$ is well defined, in the sense that the distribution of $\rho_\iy = \lim_{n \to \iy} [\Phi^{U, \beta}]^n(\rho_0)$ does not depend on the eigenvectors of $\beta$, but only on the eigenvalue vector $b$.
\item For all unitary matrix $V \in \U(d)$, $\rho$ and $V \rho V^*$ have the same distribution (we say that the measure $\nu_b$ is \emph{unitarily invariant}). 
\item There exists a probability measure $n_{b}$ on the probability simplex $\Delta_{d-1}$ such that if $D$ is a diagonal matrix sampled from $n_{b}$ and $V$ is an independent Haar unitary on $\U(d)$, then $VDV^*$ has distribution $\nu_{b}$. In other words, the distribution of a random density matrix $\rho \sim \nu_{b}$ is determined by the distribution of its eigenvalue vector $\Delta_{d-1} \ni \lambda \sim n_{b}$.
\end{enumerate}
\end{prop}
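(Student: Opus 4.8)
The plan is to reduce all three statements to invariance properties of the Haar measure $\Haar_{dd'}$ combined with the covariance of the Stinespring construction $U\mapsto\Phi^{U,\beta}$. Throughout, recall that by Theorem~\ref{thm:haar_1_unique} the limiting state $\rho_\iy=\rho_\iy(\Phi^{U,\beta})$ is $\Haar_{dd'}$-almost surely a well-defined element of $\MD_d(\C)$, so the pushforward measures below are all meaningful; I will use repeatedly the elementary identity
\[\trace_\K\big[(V\otimes\I)\,A\,(V\otimes\I)^*\big]=V\,\trace_\K[A]\,V^*,\qquad V\in\U(d),\ A\in\M_{dd'}(\C),\]
which is immediate from the defining property of the partial trace, together with the analogous triviality that conjugating the second leg by $W\in\U(d')$ leaves $\trace_\K$ unchanged.

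For (a), suppose $\beta$ and $\beta'$ share the eigenvalue vector $b$ and write $\beta'=W\beta W^*$ with $W\in\U(d')$. Since $\rho\otimes W\beta W^*=(\I_d\otimes W)(\rho\otimes\beta)(\I_d\otimes W)^*$, one gets $\Phi^{U,\beta'}=\Phi^{U(\I_d\otimes W),\beta}$; as right multiplication by the fixed unitary $\I_d\otimes W$ preserves $\Haar_{dd'}$, the change of variables $\tilde U=U(\I_d\otimes W)$ shows that the pushforward of $\Haar_{dd'}$ by $U\mapsto\rho_\iy(\Phi^{U,\beta'})$ equals the pushforward by $U\mapsto\rho_\iy(\Phi^{U,\beta})$, so $\nu_b$ does not depend on the eigenvectors of $\beta$. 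For (b), fix $V\in\U(d)$, put $\hat U=(V\otimes\I)U(V\otimes\I)^*$, and use $V^*\rho V\otimes\beta=(V\otimes\I)^*(\rho\otimes\beta)(V\otimes\I)$ together with the partial-trace identity to obtain
\[V\,\Phi^{U,\beta}(V^*\rho V)\,V^*=\trace_\K\big[\hat U\,(\rho\otimes\beta)\,\hat U^*\big]=\Phi^{\hat U,\beta}(\rho).\]
Hence if $\rho_\iy$ is the (a.s. unique) invariant state of $\Phi^{U,\beta}$ then $V\rho_\iy V^*$ is the invariant state of $\Phi^{\hat U,\beta}$, which lies in $\mathcal C$ almost surely; by left- and right-invariance of $\Haar_{dd'}$, $\hat U$ is again Haar distributed, so $V\rho_\iy V^*$ and $\rho_\iy$ have the same law, i.e. $\nu_b$ is unitarily invariant.

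For (c), let $\rho\sim\nu_b$ and let $V\sim\Haar_d$ be independent of $\rho$. Since $\rho$ and $V$ are independent, conditioning on a value of $V$ leaves the law of $\rho$ unchanged, so by (b) the conditional law of $V\rho V^*$ is $\nu_b$ for each fixed $V$, and therefore $V\rho V^*\sim\nu_b$ unconditionally. Choose a Borel map $\rho\mapsto(\lambda(\rho),W(\rho))$ with $\lambda(\rho)\in\Delta_{d-1}$ the non-increasingly ordered eigenvalue vector of $\rho$ and $\rho=W(\rho)\diag(\lambda(\rho))W(\rho)^*$. Then $V\rho V^*=\big(VW(\rho)\big)\diag(\lambda(\rho))\big(VW(\rho)\big)^*$, and by left-invariance of $\Haar_d$ the unitary $VW(\rho)$ is Haar distributed and independent of $\lambda(\rho)$. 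Taking $n_b$ to be the law of $\lambda(\rho)$ on $\Delta_{d-1}$, it follows that $\nu_b$ is the law of $VDV^*$ with $D=\diag(\lambda)$, $\lambda\sim n_b$, and $V$ an independent Haar unitary, which is the claimed representation.

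The only genuinely delicate point is the measurable selection of eigenvalues and eigenvectors used in (c): on the set where $\rho$ has repeated spectrum the eigenbasis is not unique. This is a standard fact — the ordered eigenvalues depend continuously, hence Borel-measurably, on $\rho$, and a Borel section of the eigenvector map exists — and in any case it does not affect the final identity, which only involves the conjugation orbit of $\diag(\lambda)$. Everything else is bookkeeping with the covariance identity for $\trace_\K$ and the bi-invariance of the Haar measures, so I expect no further obstacles.
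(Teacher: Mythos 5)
Your proof is correct and follows essentially the same route as the paper: part (a) via the substitution $\tilde U = U(\I_d\otimes W)$ and right-invariance of $\Haar_{dd'}$, part (b) via conjugation by $V\otimes\I$ and bi-invariance, and part (c) as a consequence of (b). You simply supply the details (the covariance identity for $\trace_\K$, the Fubini/conditioning step, and the measurable eigendecomposition) that the paper's one-line sketch leaves implicit.
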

\begin{proof}
To prove the first assertion, we show that for all $W \in \U(d')$, replacing $\beta$ with $W \beta W^*$ does not change the distribution of $\rho_\iy$. To see this, note that by the invariance of the Haar probability measure $\Haar_{dd'}$, the random matrices $U$ and $\tilde U = U (\I_d \otimes W)$ have the same distribution. It follows that the same holds for the random channels $\Phi^{U, \beta}$ and $\Phi^{\tilde U, \beta} = \Phi^{U, W \beta W^*}$ and thus for their invariant states. The second affirmation is proved in the same manner (this time using a fixed unitary $V$ acting on $\H$) and the third one is a trivial consequence of the second.
\end{proof}

\section{Repeated interactions with random auxiliary states}\label{sec:random_env}

In the previous section we considered repeated \emph{identical} quantum interactions of a system $\S$ with a chain of identical environment systems $\mathcal E$. We now introduce classical randomness in our model by considering random states on the environment $\mathcal E$. In this model, the unitary describing the interaction is a fixed deterministic matrix $U \in \U(dd')$.

The $n$-th interaction between the small system $\S$ and the environment $\E$ is given by the following relation:
\[\rho_n = \Phi^{\beta_n}(\rho_{n-1}) = \trace_\K [U(\rho_{n-1} \otimes \beta_n)U^*],\]
where $(\beta_n)_n$ is a sequence of independent identically distributed random density matrices. Notice that, since $U$ is constant, we use the shorthand notation $\Phi^\beta = \Phi^{U, \beta}$.

We are interested, as usual, in the limit $n \to \iy$. In this case however, the (random) channels $\Phi^{\beta_n}$ do not have in general a common invariant state, so one has to look at ergodic limits. We use here the machinery developed by L. Bruneau, A. Joye and M. Merkli in \cite{bjm_infprod} (see \cite{bjm_randrep, bjm_asympt} for additional results in this direction). For the sake of completeness, let us state their main result. 

\begin{thm}[\cite{bjm_infprod}, Theorem 1.3.]\label{thm:bjm}
Let $(M_n)_n$ be a sequence of i.i.d. random contractions of $\M_d(\C)$ with the following properties:
\begin{enumerate}
\item There exists a constant vector $\psi \in \C^d$ such that $M(\omega)\psi = \psi$ for (almost all) $\omega$;
\item $\P( \text{the multiplicity of the eigenvalue 1 of } M(\omega) \text{ is exactly one}) >0$.
\end{enumerate}
Then the (deterministic) matrix $\E[M]$ has eigenvalue 1 with multiplicity one and there exists a constant vector $\theta \in \C^d$ such that
\[\lim_{N \to \iy} \frac 1 N \sum_{n=1}^N M_1(\omega)M_2(\omega)\cdots M_n(\omega) = \ketbra{\psi}{\theta} = P_{1, \E[M]},\]
where $P_{1, \E[M]}$ is the rank-one spectral projector of $\E[M]$ corresponding to the eigenvalue 1. 
\end{thm}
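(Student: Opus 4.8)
The plan is to use the common fixed vector $\psi$ to split off a one–dimensional piece, reduce the statement to an ordinary product of random contractions on a hyperplane, and then extract the Cesàro limit by a summation–by–parts (martingale) identity. First I would normalize $\norm{\psi}=1$ and record the elementary fact that a Hilbert–space contraction $M$ with $M\psi=\psi$ necessarily satisfies $M^*\psi=\psi$: expanding $\norm{M^*\psi-\psi}^2=\norm{M^*\psi}^2-2\Re\scalar{M^*\psi}{\psi}+1$ and using $\scalar{M^*\psi}{\psi}=\scalar{\psi}{M\psi}=1$ together with $\norm{M^*\psi}\leq 1$ shows the left side is $\leq 0$. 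Hence $\psi^\orth$ reduces every $M_n$, and in the orthogonal splitting $\C^d=\C\psi\oplus\psi^\orth$ one has $M_n=1\oplus N_n$ with $N_n:=M_n|_{\psi^\orth}$ an i.i.d.\ sequence of contractions of $\psi^\orth\simeq\C^{d-1}$, so that $M_1\cdots M_n=1\oplus(N_1\cdots N_n)$. It then suffices to show (i) $I-\E[N]$ is invertible on $\psi^\orth$, and (ii) $\frac{1}{N}\sum_{n=1}^N N_1\cdots N_n\to 0$ almost surely; the form of the limit will follow at once.

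\textbf{Step (i).} Suppose $\E[N]v=v$ for some unit vector $v\in\psi^\orth$. Then $v=\E[N(\omega)v]$ while $\norm{N(\omega)v}\leq 1$ almost surely, and strict convexity of the closed unit ball of a Hilbert space (equivalently, the equality case in $\norm{\E X}\leq\E\norm{X}$) forces $N(\omega)v=v$ for almost every $\omega$. But then $M(\omega)$ fixes the two linearly independent vectors $\psi$ and $v$, so the eigenvalue $1$ of $M(\omega)$ has multiplicity at least $2$ almost surely, contradicting hypothesis (2). Hence $\ker(I-\E[N])=\{0\}$ and $I-\E[N]$ is invertible. In particular $\E[M]=1\oplus\E[N]$ has eigenvalue $1$ of geometric multiplicity one; since $\E[M]$ is an average of contractions it is itself a contraction, hence power bounded, so its unimodular eigenvalues are semisimple and the algebraic multiplicity of $1$ also equals one. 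Thus $P_{1,\E[M]}$ is the rank–one idempotent $1\oplus 0$ with range $\C\psi$ and kernel $\psi^\orth$.

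\textbf{Step (ii) and conclusion.} Set $E:=\E[N]$, $Q_n:=N_1\cdots N_n$ (with $Q_0=I$) and $\mathcal{F}_n:=\sigma(N_1,\dots,N_n)$. By independence $\E[Q_n\mid\mathcal{F}_{n-1}]=Q_{n-1}E$, so $D_n:=Q_n-Q_{n-1}E=Q_{n-1}(N_n-E)$ is a martingale–difference sequence with $\norm{D_n}\leq 2$, and a summation by parts gives
\[\Big(\sum_{n=1}^N Q_n\Big)(I-E)=\sum_{n=1}^N D_n+(I-Q_N)E.\]
The strong law for martingales with bounded increments (applied entrywise) yields $\frac{1}{N}\sum_{n=1}^N D_n\to 0$ almost surely, the term $\frac{1}{N}(I-Q_N)E$ is $O(1/N)$, and multiplying on the right by $(I-E)^{-1}$ (legitimate by Step (i)) gives $\frac{1}{N}\sum_{n=1}^N Q_n\to 0$ almost surely. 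Therefore $\frac{1}{N}\sum_{n=1}^N M_1\cdots M_n=1\oplus\big(\frac{1}{N}\sum_{n=1}^N Q_n\big)\to 1\oplus 0=P_{1,\E[M]}$, and since any rank–one idempotent with range $\C\psi$ has the form $\ketbra{\psi}{\theta}$ with $\scalar{\theta}{\psi}=1$, the statement follows (one may in fact take $\theta=\psi$, because $M(\omega)^*\psi=\psi$ for all $\omega$ gives $\E[M]^*\psi=\psi$).

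\textbf{Main obstacle.} The only non-routine point is Step (i): upgrading ``$v$ is a fixed point of the averaged contraction $\E[N]$'' to ``$v$ is a common fixed point of almost every $M(\omega)$'' is precisely where hypotheses (1) and (2) are used, via the strict–convexity/equality–case argument; and the invertibility of $I-\E[N]$ it yields is exactly what makes the summation–by–parts identity in Step (ii) exploitable. The reduction to the hyperplane and the martingale law of large numbers are standard bookkeeping.
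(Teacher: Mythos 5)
The paper offers no proof of this statement---it is quoted from Bruneau--Joye--Merkli \cite{bjm_infprod}---so your attempt can only be judged on its own terms and against the way the theorem is used later in the paper. Your Steps (i) and (ii) are sound and contain the two ingredients one expects: upgrading a fixed point of the averaged map to an almost-sure common fixed point of the $M(\omega)$ (via the equality case in Jensen/Cauchy--Schwarz), and an Abel-summation plus martingale law of large numbers to kill the Ces\`aro mean of the decaying block.

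The gap is in your very first line. You read ``contraction'' as ``contraction for the Euclidean norm'' and deduce $M(\omega)^*\psi=\psi$, hence that $\psi^{\orth}$ reduces every $M_n$ and that $\theta=\psi$ up to normalization. That reading cannot be the intended one. It makes the general $\theta$ in the conclusion vacuous, and it is incompatible with the application in Theorem \ref{thm:prod_q_channels}: there $M_n=\Psi_n$ is the dual of a quantum channel acting on $\M_d(\C)$ with the Hilbert--Schmidt scalar product, $\psi=\I_d$, and $\Psi_n^*(\I_d)=\Phi_n(\I_d)$, which equals $\I_d$ only for unital channels; the limit $\theta$ must be the invariant state of $\E[\Phi]$, which is generically not $\I_d/d$. (If the $\Psi_n$ really were Hilbert--Schmidt contractions fixing $\I_d$ in your sense, your argument would prove that every $\E[\Phi]$ has $\I_d/d$ as invariant state, which is false.) In \cite{bjm_infprod} the contractivity is with respect to a general, non-Hilbertian norm, so $\psi^{\orth}$ is not invariant: one must work with the block upper-triangular form of $M$ in a basis extending $\psi$ by a non-orthogonal complement, the simplicity of the eigenvalue $1$ of $\E[M]$ needs an argument not resting on strict convexity of the unit ball (which fails for general norms), and $\theta$ is produced by the accumulated off-diagonal blocks of the products---exactly the part your orthogonal reduction erases. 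Your Step (ii) survives essentially unchanged as the treatment of the lower-right block, but as written the proof establishes only the special case $M(\omega)^*\psi=\psi$, which is too weak for the paper's purposes.
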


Note that this result does't apply to our situation, mainly for two reasons: the order of the composition of the channels $\Phi$ is reversed and the linear applications $\Phi^{\beta_n}$ do not necessarily share a constant invariant state $\psi$. This inconvenient can be overcome by considering dual channels (see Section \ref{sec:spectral}), or, in physicists' language, by switching from the Schr\"odinger to the Heisenberg picture of Quantum Mechanics. Duals of quantum channels are unital, hence they have in common the invariant element $\I$. Another important benefit of considering duals is that the order of composition of maps is reversed. Indeed, if one starts from a state $\rho_0$, applies successively $n$ channels $\Phi_1, \ldots, \Phi_n$ and finally measures an observable $A \in \Msa_d(\C)$, it is easy to see that the expected outcome is 
\[\trace[(\Phi_n \circ \cdots \circ \Phi_1)(\rho) \cdot A] = \trace[(\Phi_{n-1}\circ \cdots \circ \Phi_1)(\rho) \cdot \Psi_n(A)] = \cdots = \trace[\rho \cdot (\Psi_1 \circ \cdots \circ \Psi_n)(A)].\]
We are now in position to state and prove the analogue of Theorem \ref{thm:bjm} for infinite products of quantum channels, simply by replacing quantum channels with their duals.

\begin{thm}\label{thm:prod_q_channels}
Let $(\Phi_n)_n$ be a sequence of i.i.d. random quantum channels acting on $\M_d(\C)$ such that $\P( \Phi \text{ has an unique invariant state}) >0$. Then $\E[\Phi]$ is a quantum channel with an unique invariant state $\theta \in \MD_d(\C)$ and, $\P$-almost surely, 
\[\lim_{N \to \iy} \frac 1 N \sum_{n=1}^N [\Phi_n \circ \cdots \circ \Phi_1](\rho_0) = \theta, \quad \forall \rho_0 \in \MD_d(\C).\]
\end{thm}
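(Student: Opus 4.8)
The plan is to reduce Theorem \ref{thm:prod_q_channels} to the Bruneau--Joye--Merkli result (Theorem \ref{thm:bjm}) by passing to the dual channels, exactly as the discussion preceding the statement suggests. First I would set $\Psi_n = \Psi^{\beta_n}$ to be the dual of $\Phi_n$ with respect to the Hilbert--Schmidt scalar product, so that $(\Psi_n)_n$ is again an i.i.d.\ sequence of linear maps on $\M_d(\C)$. Recall from Section \ref{sec:spectral} that each $\Psi_n$ is unital and completely positive, hence in particular a contraction for the operator norm on $\M_d(\C)$ (by Lemma \ref{lem:channel}(2) applied to $\Phi_n$, together with duality, or directly from the Schwarz inequality); and $\Psi_n(\I_d) = \I_d$, so all $\Psi_n$ share the common fixed vector $\psi = \I_d \in \C^{d^2}$. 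This gives hypothesis (1) of Theorem \ref{thm:bjm}. For hypothesis (2), observe that $\Phi_n$ and $\Psi_n$ have the same spectrum (a map and its HS-adjoint have conjugate, hence equal-in-modulus, spectra — in fact the eigenvalues of $\Psi$ are the complex conjugates of those of $\Phi$, and $1$ is real), and more precisely the algebraic multiplicity of the eigenvalue $1$ is the same for $\Phi_n$ and $\Psi_n$; therefore $\P(\text{mult.\ of eigenvalue }1\text{ of }\Psi\text{ is exactly one}) = \P(\Phi\text{ has a unique invariant state}) > 0$ by hypothesis. (One should be slightly careful: ``unique invariant state'' for a channel is equivalent to the eigenvalue $1$ being simple, because the fixed-point space of a channel, when one-dimensional, is spanned by a density matrix by Lemma \ref{lem:channel}(1); conversely a higher-dimensional fixed-point space yields more than one invariant state. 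This equivalence needs a one-line justification.)

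Having checked the hypotheses, Theorem \ref{thm:bjm} applies to $(\Psi_n)_n$ and yields that $\E[\Psi]$ has $1$ as a simple eigenvalue and that, $\P$-almost surely,
\[\lim_{N \to \iy} \frac 1 N \sum_{n=1}^N \Psi_1 \Psi_2 \cdots \Psi_n = \ketbra{\I_d}{\theta}\]
for some constant $\theta \in \M_d(\C)$, this limit being the rank-one spectral projector of $\E[\Psi]$ at eigenvalue $1$. Next I would dualize back: since $\E[\Psi] = \E[\Phi]^*$ (the dual is linear and commutes with taking expectations of the i.i.d.\ maps), $\E[\Phi]$ also has $1$ as a simple eigenvalue, and by Lemma \ref{lem:channel} its one-dimensional fixed-point space is spanned by a density matrix, which is the claimed unique invariant state, also called $\theta$ after identifying $\M_d(\C)$ with its dual via the trace. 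The spectral projector of $\E[\Phi]$ at $1$ is then the HS-adjoint of $\ketbra{\I_d}{\theta}$, namely the map $X \mapsto \trace[X]\,\theta$ (using $\trace[\theta] = 1$ and that the fixed point of $\E[\Psi]$ is $\I_d$, which forces the correct normalization of $\theta$).

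Finally I would translate the ergodic statement about products of $\Psi$'s into one about products of $\Phi$'s by pairing against an arbitrary initial state and observable: for any $\rho_0 \in \MD_d(\C)$ and any $A \in \M_d(\C)$,
\[\trace\!\left[\frac 1 N \sum_{n=1}^N (\Phi_n \circ \cdots \circ \Phi_1)(\rho_0)\cdot A\right] = \trace\!\left[\rho_0 \cdot \frac 1 N \sum_{n=1}^N (\Psi_1 \circ \cdots \circ \Psi_n)(A)\right],\]
which converges $\P$-a.s.\ to $\trace[\rho_0 \cdot \scalar{\theta}{A}\,\I_d] = \trace[\rho_0]\cdot \trace[\theta^* A]$; wait — more cleanly, the right-hand limit is $\trace[\rho_0 \cdot \ketbra{\I_d}{\theta}(A)] = \trace[\rho_0] \, \overline{\trace[\theta^* A]}$, and since $\trace[\rho_0]=1$ and everything is arranged so that this equals $\trace[\theta A]$, we conclude $\frac 1 N \sum_{n=1}^N (\Phi_n \circ \cdots \circ \Phi_1)(\rho_0) \to \theta$ weakly, hence (finite dimension) in norm. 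The almost-sure set on which convergence holds can be taken independent of $\rho_0$ by testing against a basis of $\M_d(\C)$, giving the ``$\forall \rho_0$'' in the statement.

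The main obstacle, such as it is, is bookkeeping rather than substance: one must carefully track how duality reverses the order of composition (so that left products of $\Psi$'s correspond to the physically natural right-to-left products of $\Phi$'s), and one must get the normalization of $\theta$ and the identification $\M_d(\C) \simeq \M_d(\C)^*$ right so that the rank-one projector $\ketbra{\I_d}{\theta}$ for $\E[\Psi]$ indeed transports to the map $X \mapsto \trace[X]\theta$ for $\E[\Phi]$. The only genuinely non-formal point is the equivalence ``$\Phi$ has a unique invariant state $\iff$ eigenvalue $1$ of $\Phi$ is simple,'' which uses Lemma \ref{lem:channel}(1) in both directions; everything else is a direct invocation of Theorem \ref{thm:bjm} plus elementary linear algebra.
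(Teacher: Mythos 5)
Your proposal is correct and follows essentially the same route as the paper's own proof: pass to the Hilbert--Schmidt duals $\Psi_n$, which are unital and share the fixed point $\I_d$, verify the hypotheses of Theorem \ref{thm:bjm}, and transport the resulting rank-one projector $\ketbra{\I_d}{\theta}$ back by pairing against an observable $A$. The extra care you take with the equivalence between simplicity of the eigenvalue $1$ and uniqueness of the invariant state, and with choosing the almost-sure set uniformly in $\rho_0$, only makes the argument more complete than the published version.
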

\begin{proof}
Let us start by introducing some notation. Let, for some initial state $\rho_0 \in \MD_d(\C)$,
\[ \mu_N = \frac 1 N \sum_{n=1}^N [\Phi_n \circ \cdots \circ \Phi_1](\rho_0),\]
and consider the dual operators $\Psi_n$ which are, as described earlier, the adjoints of $\Phi_n$ with respect to the Hilbert-Schmidt scalar product on $\M_d(\C)$. Then, for a self-adjoint observable $A \in \Msa_d(\C)$, one has
\begin{equation}\label{eq:mu_N_A}
\trace[\mu_N A] = \trace \left[ \rho_0 \frac 1 N \sum_{n=1}^N (\Psi_1 \circ \cdots \circ \Psi_n)(A) \right].
\end{equation} 
It is easy to see that the random operators $\Psi_n$ satisfy the hypotheses of Theorem \ref{thm:bjm} on the Hilbert space $\M_d(\C)$ endowed with the  Hilbert-Schmidt scalar product. Indeed, the spectrum of $\Psi$ is the complex conjugate of the spectrum of $\Phi$, hence $\Psi$ is a contraction (with respect to the Hilbert-Schmidt norm). Moreover, with non-zero probability, $\I_d$ is the unique invariant state of $\Psi$. From the Theorem \ref{thm:bjm}, one obtains the existence of a non-random element $\theta \in \M_d(\C)$ such that, $\P$-almost surely, 
\[\lim_{N \to \iy}\frac 1 N \sum_{n=1}^N \Psi_1 \circ \cdots \circ \Psi_n = \ketbra{\I_d}{\theta}.\]
Plugging this into Eq.~(\ref{eq:mu_N_A}), one gets
\[\lim_{N \to \iy} \trace[\mu_N A] = \trace[\rho_0 \ketbra{\I_d}{\theta}A] = \scalar{\theta}{A}_{\text{HS}} \trace[\rho_0 \I_d] = \trace[\theta^* A].\]
Since the set of density matrices $\MD_d(\C)$ is (weakly) closed, $\theta = \theta^* \in \MD_d(\C)$ and 
$\lim_{N \to \iy} \mu_N = \theta$. The fact that $\theta$ is the \emph{unique} invariant state of $\E[\Phi]$ follows again from Theorem \ref{thm:bjm}.
\end{proof}
\begin{rem}
When comparing the preceding theorem with the Proposition \ref{prop:conv_1_unique}, one notes that the hypotheses are relaxed here, asking only that the eigenvalue 1 is simple, without further constraints on the peripheral spectrum. This is due to the fact that we are considering C\'esaro means and fluctuations (such as the ones seen in Example \ref{eg:periph}) cancel out at the limit.
\end{rem}

We now move on to apply the preceding general result to the setting described in the beginning of this section. Recall that the successive interactions were described by i.i.d. random quantum channels $\Phi_n = \Phi^{\beta_n}$, where 
\[\Phi^\beta(\rho) = \trace_\K [U(\rho \otimes \beta)U^*].\]
Since the previous equation is linear in $\beta$, $\E[\Phi^\beta] = \Phi^{\E[\beta]}$ and one gets the following corollary.
\begin{cor}
Let $\{\beta_n\}_n$ be a sequence of i.i.d. random density matrices and consider the repeated quantum interaction scheme with constant interaction unitary $U$. Assume that, with non-zero probability, the induced quantum channel $\Phi^{\beta}$ has an unique invariant state. Then, $\P$-almost surely, for all initial states $\rho_0 \in \MD_d(\C)$, one has
\[\lim_{N \to \iy} \frac 1 N \sum_{n=1}^N [\Phi^{\beta_n} \circ \cdots \circ \Phi^{\beta_1}](\rho_0) = \theta,\]
where $\theta \in \MD_d(\C)$ is the unique invariant state of the deterministic channel $\Phi^{\E[\beta]}$. In particular, if $\E[\beta] = \I_{d'}/{d'}$, then $\theta$ is the ``chaotic'' state ${\I_d}/{d}$.
\end{cor}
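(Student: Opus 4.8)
The plan is to obtain this statement as an immediate corollary of Theorem~\ref{thm:prod_q_channels}: almost all of the work consists in checking that its hypotheses are met for the family $\Phi_n := \Phi^{\beta_n}$ and in computing the limiting state explicitly.

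First I would note that for the fixed dilation unitary $U \in \U(dd')$ and an ancilla state $\beta$, the map $\Phi^{\beta}(\rho) = \trace_\K[U(\rho\otimes\beta)U^*]$ is a quantum channel by the Stinespring form in Proposition~\ref{prop:stinespring_kraus}(a). Since $U$ is deterministic and $(\beta_n)_n$ is an i.i.d.\ sequence of random density matrices, the channels $\Phi_n = \Phi^{\beta_n}$ form an i.i.d.\ sequence of random quantum channels on $\M_d(\C)$. The only further hypothesis of Theorem~\ref{thm:prod_q_channels}, namely $\P(\Phi^{\beta}\text{ has a unique invariant state})>0$, is exactly the standing assumption of the corollary. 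Thus Theorem~\ref{thm:prod_q_channels} applies and produces a non-random state $\theta \in \MD_d(\C)$, characterized as the unique invariant state of $\E[\Phi^{\beta}]$, with $\P$-almost sure convergence $\tfrac1N\sum_{n=1}^N[\Phi^{\beta_n}\circ\cdots\circ\Phi^{\beta_1}](\rho_0)\to\theta$ for every initial $\rho_0$.

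It then remains to identify $\E[\Phi^{\beta}]$ and to treat the special case. For each fixed $\rho$ the assignment $\beta\mapsto\trace_\K[U(\rho\otimes\beta)U^*]$ is linear in $\beta$, so $\beta\mapsto\Phi^{\beta}$ is a linear map from $\Msa_{d'}(\C)$ into the finite-dimensional space of linear operators on $\M_d(\C)$; since the $\beta_n$ are (uniformly) bounded, expectation commutes with this map and $\E[\Phi^{\beta}]=\Phi^{\E[\beta]}$, so $\theta$ is the unique invariant state of the deterministic channel $\Phi^{\E[\beta]}$. Finally, when $\E[\beta]=\I_{d'}/d'$, the computation
\[
\Phi^{\E[\beta]}(\I_d/d) = \trace_\K\left[U\big(\tfrac{\I_d}{d}\otimes\tfrac{\I_{d'}}{d'}\big)U^*\right] = \trace_\K\left[U\,\tfrac{\I_{dd'}}{dd'}\,U^*\right] = \trace_\K\left[\tfrac{\I_{dd'}}{dd'}\right] = \frac{\I_d}{d}
\]
shows that $\I_d/d$ is invariant for $\Phi^{\E[\beta]}$, hence by the uniqueness clause equals $\theta$. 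I do not anticipate a real obstacle: the only two points worth a line are that uniqueness of the invariant state of $\E[\Phi^{\beta}]$ is supplied for free by Theorem~\ref{thm:prod_q_channels} (no extra spectral input is needed, unlike in Proposition~\ref{prop:conv_1_unique}), and the routine interchange of $\E$ with the linear map $\beta\mapsto\Phi^{\beta}$.
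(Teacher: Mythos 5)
Your proposal is correct and follows exactly the paper's route: the corollary is obtained by applying Theorem~\ref{thm:prod_q_channels} to the i.i.d.\ channels $\Phi^{\beta_n}$ and using linearity of $\beta\mapsto\Phi^{\beta}$ to identify $\E[\Phi^{\beta}]=\Phi^{\E[\beta]}$. Your explicit verification that $\I_d/d$ is invariant when $\E[\beta]=\I_{d'}/d'$ is a welcome detail the paper leaves implicit.
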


\section{Repeated interactions with i.i.d. unitaries}\label{sec:iid_unitaries}

We now consider a rather different framework from the one studied in Sections \ref{sec:fixed} and \ref{sec:random_env}. We shall assume that the interaction unitaries $U_n$ acting on the coupled system $\H \otimes \K$ are random independent and identically distributed (i.i.d.) according to the unique invariant (Haar) probability measure $\Haar_{dd'}$ on the group $\U (dd')$. This is a rather non-conventional model from a physical point of view, but it permits to relax hypothesis on the successive states of the environment and to obtain an ergodic-type convergence result.

As before, we start with a fixed state $\rho_0 \in \MD_d(\C)$. The $n$-th interaction is given by $\rho_{n} = \Phi^{U_n, \beta_n}(\rho_{n-1})$, where $(\beta_n)_n$ is a (possibly random) sequence of density matrices on $\K$ and $(U_n)_n$ is a sequence of i.i.d. Haar unitaries of $\U(dd')$ independent of the sequence $(\beta_n)_n$. Note that we make no assumption on the joint distribution of the sequence $(\beta_n)_n$; in particular, the environment states can be correlated or they can have non-identical probability distributions. The state of the system after $n$ interactions is given by the forward iteration of the applications  $\Phi^{U_n, \beta_n}$:
\begin{equation}\label{eq:cocycle}
\rho_n = \Phi^{U_n, \beta_n} \circ \Phi^{U_{n-1}, \beta_{n-1}} \circ \cdots \circ \Phi^{U_1, \beta_1} \rho_0.
\end{equation}
Since we made no assumption on the successive states of the environment $\beta_n \in \MD_{d'}(\C)$, the sequence $(\rho_n)_n$ is not a Markov chain in general. Indeed, the density matrices $(\beta_n)_n$ were not supposed independent, hence $\beta_{n+1}$ (and thus $\rho_{n+1}$) may depend not only on the present randomness, but also on past randomness, such as $\beta_{n-1}$, $\beta_{n-2}$, etc. Although the sequence $(\rho_n)_n$ lacks markovianity, it has the following important invariance property. 

\begin{lem}\label{lem:U-iid-same-law}
Let $(V_n)_n$ be a sequence of i.i.d. Haar unitaries independent of the family $\{U_n, \beta_n\}_n$ and consider the sequence of successive states $(\rho_n)_n$ defined in Eq.~(\ref{eq:cocycle}). Then the sequences $(\rho_n)_n$ and $(V_n\rho_n V_n^*)_n$ have the same distribution.
\end{lem}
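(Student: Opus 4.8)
The plan is to exploit the defining feature of the channels $\Phi^{U,\beta}$: conjugating the interaction unitary $U$ on the \emph{output} side by a unitary $V \in \U(d)$ transports cleanly to conjugation of the channel's output by $V$. Concretely, for any $V \in \U(d)$ one checks the algebraic identity
\[
\Phi^{(V\otimes \I_{d'})U, \beta}(\rho) = \trace_\K\big[(V\otimes \I_{d'})U(\rho\otimes\beta)U^*(V^*\otimes\I_{d'})\big] = V\,\Phi^{U,\beta}(\rho)\,V^*,
\]
using that the partial trace commutes with conjugation by operators of the form $V\otimes\I_{d'}$ (this is immediate from the block-matrix description of $\trace_\K$ recalled in Section~\ref{sec:rep_int}). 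More generally, inserting unitaries of the form $V^*\otimes\I_{d'}$ on the \emph{input} side, one gets $\Phi^{(V\otimes\I)U(W\otimes\I),\beta}(\rho) = V\,\Phi^{U,\beta}(W^*\rho W)\,V^*$. The idea is to choose these insertions in a telescoping fashion so that the interior conjugations cancel between consecutive factors in the composition~(\ref{eq:cocycle}).

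First I would introduce, for a given sequence $(V_n)_n$ of i.i.d. Haar unitaries on $\U(d)$ independent of $\{U_n,\beta_n\}_n$, the modified interaction unitaries
\[
\tilde U_n = (V_n \otimes \I_{d'})\, U_n\, (V_{n-1}^* \otimes \I_{d'}) \quad (n\ge 2), \qquad \tilde U_1 = (V_1\otimes\I_{d'})\,U_1.
\]
By the left- and right-invariance of the Haar measure $\Haar_{dd'}$, and the independence of the $V_n$ from the $U_n$, each $\tilde U_n$ is again Haar distributed on $\U(dd')$, and I claim the family $(\tilde U_n)_n$ is independent and independent of $(\beta_n)_n$ — this follows from a conditioning argument: condition on $(V_n)_n$ and on $(\beta_n)_n$, note that conditionally the $\tilde U_n$ are independent Haar (each $U_n$ being independently Haar and left/right translation preserving Haar), and since the resulting conditional law does not depend on the conditioning, unconditional independence follows. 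Hence $\{\tilde U_n,\beta_n\}_n$ has the same joint distribution as $\{U_n,\beta_n\}_n$, so the sequence $(\tilde\rho_n)_n$ built from $(\tilde U_n,\beta_n)$ via~(\ref{eq:cocycle}) has the same law as $(\rho_n)_n$.

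Then I would compute $\tilde\rho_n$ directly. Writing out the composition $\Phi^{\tilde U_n,\beta_n}\circ\cdots\circ\Phi^{\tilde U_1,\beta_1}(\rho_0)$ and using the transformation rule above, the factor $(V_{k}^*\otimes\I)$ coming from $\tilde U_{k+1}$ produces a conjugation of the \emph{output} of $\Phi^{\tilde U_k,\beta_k}$ by $V_k^*$, which exactly undoes the conjugation by $V_k$ produced by $\tilde U_k$ on its output. All interior conjugations therefore cancel telescopically, and one is left with $\tilde\rho_n = V_n\,\rho_n\,V_n^*$, where $\rho_n$ is the state~(\ref{eq:cocycle}) built from the original data — but one must be slightly careful: the cancellation is against the \emph{same} random channels, so what one literally obtains is that the sequence $(\tilde\rho_n)_n$, pathwise, equals $(V_n\,\rho_n^{\tilde{}}\,V_n^*)_n$ where $\rho_n^{\tilde{}}$ is built from $(U_n,\beta_n)$ — but since $(\tilde U_n,\beta_n) \overset{d}{=} (U_n,\beta_n)$ we may as well have started from $(U_n,\beta_n)$ themselves. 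Cleanly: the joint law of $\big((\rho_n)_n, (V_n)_n\big)$ equals the joint law of $\big((\tilde\rho_n)_n,(V_n)_n\big) = \big((V_n\rho_n V_n^*)_n, (V_n)_n\big)$, which in particular gives $(\rho_n)_n \overset{d}{=} (V_n\rho_n V_n^*)_n$ as claimed.

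The main obstacle is the bookkeeping in the previous paragraph: one has to be careful not to conflate "the sequence built from $\tilde U$" with "the sequence built from $U$", and the clean statement comes from a distributional identity at the level of the \emph{pair} (states, $V_n$'s) rather than a pathwise identity for a single family. Once the transformation rule $\Phi^{(V\otimes\I)U(W\otimes\I),\beta}(\rho)=V\Phi^{U,\beta}(W^*\rho W)V^*$ and the Haar-invariance/independence of the $\tilde U_n$ are established, the telescoping is a routine induction on $n$.
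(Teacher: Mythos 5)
Your proposal is correct and follows essentially the same route as the paper: the same modified unitaries $\tilde U_1=(V_1\otimes\I)U_1$, $\tilde U_n=(V_n\otimes\I)U_n(V_{n-1}^*\otimes\I)$, the same telescoping identity $\tilde\rho_n=V_n\rho_n V_n^*$, and the same invariance/conditioning argument (the paper does it by Fubini, integrating out $U_n$ first) to show the $\tilde U_n$ are i.i.d.\ Haar and independent of the $\beta_n$. No substantive difference.
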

\begin{proof}
Consider a i.i.d. sequence $(V_n)_n$ of $\Haar_d$-distributed unitaries independent from the $U_n$'s and the $\beta_n$'s appearing in Eq.~(\ref{eq:cocycle}). To simplify notation, we put $\tilde \rho_n = V_n\rho_n V_n^*$. We also introduce the following sequence of (random) $dd' \times dd'$ unitary matrices:
\begin{align*}
\tilde U_1 &= (V_1 \otimes \I) U_1, \\
\tilde U_n &= (V_n \otimes \I) U_n (V_{n-1}^{*} \otimes \I), \quad \forall n \geq 2.
\end{align*}
A simple calculation shows that 
\[\tilde \rho_n = \Phi^{\tilde U_n, \beta_n} \circ \Phi^{\tilde U_{n-1}, \beta_{n-1}} \circ \cdots \circ \Phi^{\tilde U_1, \beta_1} \rho_0.\]
It follows that, in order to conclude, it suffices to show that the family $(\tilde U_n)_n$ is i.i.d. and $\Haar_{dd'}$-distributed (it is obviously independent of the $\beta$'s). We start by proving that, at fixed $n$, $\tilde U_n$ is $\Haar_{dd'}$-distributed. Since the families $(U_n)_n$ and $(V_n)_n$ are independent, one can consider realizations of these random variables on different probability space $U_n:\Omega_n^{1} \to \U(dd')$ and $V_n:\Omega_n^2 \to \U(d)$. For a positive measurable function $f:\U(dd') \to \R_+$, one has (we put $V_0=\I$)
\begin{align*}
\E[f(\tilde U_n)] &= \E[f((V_n \otimes \I) U_n (V_{n-1}^{*} \otimes \I))] =\\
&= \int f((V_n(\omega^2_{n}) \otimes \I) U_n(\omega^1_{n}) (V_{n-1}^{*}(\omega^2_{n-1}) \otimes \I)) d\P(\omega^2_{n})d\P(\omega^1_{n})d\P(\omega^2_{n-1})\\
&= \int  \left( \int f((V_n(\omega^2_{n}) \otimes \I) U_n(\omega^1_{n}) (V_{n-1}^{*}(\omega^2_{n-1}) \otimes \I)) d\P(\omega^1_{n})\right) d\P(\omega^2_{n})d\P(\omega^2_{n-1})\\
&\stackrel{(*)}{=} \int \E[f(U_n)] d\P(\omega^2_{n})d\P(\omega^2_{n-1}) = \E[f( U_n)],
\end{align*}
where we used in $(*)$ the fact that the Haar probability on $\U(dd')$ is invariant by left and right multiplication with constant unitaries. We now claim that the r.v. $\tilde U_n$ are independent. For some positive measurable functions $f_1, \ldots, f_n : \U(dd') \to \R_+$, one has
\begin{align*}
\E\left[\prod_{k=1}^n f_k(\tilde U_k)\right] &= \E\left[\prod_{k=1}^n f_k((V_k \otimes \I) U_k (V_{k-1}^{*} \otimes \I))\right] =\\
&= \int \prod_{k=1}^n f_k((V_k(\omega^2_{k}) \otimes \I) U_k(\omega^1_{k}) (V_{k-1}^{*}(\omega^2_{k-1}) \otimes \I)) \prod_{k=1}^n d\P(\omega^1_{k})d\P(\omega^2_{k})\\
&= \int  \prod_{k=1}^n \left( \int f_k((V_k(\omega^2_{k}) \otimes \I) U_k(\omega^1_{k}) (V_{k-1}^{*}(\omega^2_{k-1}) \otimes \I)) d\P(\omega^1_{k}) \right) \prod_{k=1}^n d\P(\omega^2_{k}) \\
&\stackrel{(**)}{=} \int \E[f_k(U_k)] \prod_{k=1}^n d\P(\omega^2_{k}) =  \prod_{k=1}^n \E[f_k(U_k)] \stackrel{(***)}{=}  \prod_{k=1}^n \E[f_k(\tilde U_k)].
\end{align*}
Again, we used in the equality $(**)$ the invariance of the $dd'$-dimensional Haar measure and in $(***)$ the fact that $U_k$ and $\tilde U_k$ have the same distribution.

\end{proof}

We conclude from the above result that although the successive states of the small system $(\rho_n)_n$ are random density matrices that can be correlated in a very general way, their joint probability distribution is invariant by independent unitary basis changes. In other words, the correlations manifest only at the level of the spectrum, the matrices being independently rotated by random Haar unitaries. The ergodic convergence result in such a case is established in the following proposition. 

\begin{prop}
Let $(\tau_n)_n$ be a sequence of random density matrices (we make no assumption whatsoever on their distribution) and $(V_n)_n$ a sequence of i.i.d. Haar unitaries independent of $(\tau_n)_n$. Then, almost surely,
\[\sigma_n = \frac{V_1 \tau_1 V_1^* + \ldots + V_n \tau_n V_n^*}{n} \underset{n \to \iy}{\longrightarrow}  \frac{\I_d}{d}.\]
\end{prop}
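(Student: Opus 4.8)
The plan is to fix an observable $A \in \Msa_d(\C)$, test $\sigma_n$ against $A$, and show that $\trace[\sigma_n A] \to \trace[A]/d = \trace[(\I_d/d)A]$ almost surely; since finitely many observables span $\Msa_d(\C)$ and each $\sigma_n$ is a (random) density matrix, this yields the claimed convergence. Writing $\trace[\sigma_n A] = \frac1n \sum_{k=1}^n \trace[V_k \tau_k V_k^* A] = \frac1n \sum_{k=1}^n \trace[\tau_k (V_k^* A V_k)]$, I would introduce the random variables $X_k = \trace[\tau_k (V_k^* A V_k)]$ and analyze their average. The first key step is a conditioning argument: condition on the entire sequence $(\tau_n)_n$. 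Given this data, the $V_k$ are still i.i.d. Haar, independent of everything, so for each fixed $k$ the conditional expectation is
\[
\E[X_k \mid (\tau_n)_n] = \trace\!\left[\tau_k \, \E[V_k^* A V_k]\right] = \trace\!\left[\tau_k \cdot \frac{\trace[A]}{d}\I_d\right] = \frac{\trace[A]}{d},
\]
using the standard fact that $\E[V^* A V] = (\trace[A]/d)\I_d$ for a Haar unitary $V$ (which follows from the left-right invariance of $\Haar_d$ recalled in the introduction: $\E[V^*AV]$ commutes with all unitaries, hence is a scalar, and its trace is $\trace[A]$). So each $X_k$ has the right conditional mean.

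The second key step is a law-of-large-numbers estimate for $\frac1n\sum_k (X_k - \trace[A]/d)$. The subtlety is that the $\tau_k$ are allowed to be arbitrarily correlated, so one cannot invoke a classical i.i.d. SLLN directly. However, the $X_k$ are \emph{uniformly bounded}: $|X_k| \le \norm{\tau_k}_1 \norm{V_k^* A V_k} = \norm{A}$. Moreover, conditionally on $(\tau_n)_n$, the variables $X_k - \trace[A]/d$ are independent (they depend on disjoint sets of the independent $V_k$'s) and centered, hence form a sequence of independent bounded centered random variables under the conditional law. I would apply the SLLN for independent (not identically distributed) uniformly bounded random variables — or, most cleanly, the Marcinkiewicz–Zygmund / Kolmogorov criterion, or simply a fourth-moment Borel–Cantelli argument: $\E\big[(\frac1n\sum_{k=1}^n Y_k)^4 \mid (\tau_n)\big] = O(n^{-2})$ for independent centered $Y_k = X_k - \trace[A]/d$ bounded by $2\norm{A}$, which is summable, so by Borel–Cantelli the conditional average $\to 0$ a.s. Since this holds for almost every realization of $(\tau_n)_n$, it holds unconditionally, a.s.

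Combining, for each fixed $A$ we get $\trace[\sigma_n A] \to \trace[A]/d$ a.s.; intersecting over $A$ ranging in a fixed basis of $\Msa_d(\C)$ (a finite set, so the intersection of a.s. events is a.s.) gives $\sigma_n \to \I_d/d$ entrywise, a.s. The main obstacle is handling the arbitrary dependence among the $\tau_k$: the resolution is precisely that \emph{all} the randomness that matters is carried by the independent $V_k$, so after conditioning on $(\tau_n)_n$ one is back in a setting of independent bounded centered summands and a quantitative SLLN applies. One should also note the harmless measurability point that $(\tau_n)_n$ and $(V_n)_n$ can be realized on a product space, so the conditioning is legitimate; this is routine and I would only mention it in passing.
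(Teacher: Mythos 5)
Your proposal is correct and follows essentially the same route as the paper: test $\sigma_n$ against a self-adjoint observable $A$, use the fact that averaging over the Haar unitary gives $\E[V_k^* A V_k] = (\trace[A]/d)\,\I_d$ irrespective of the $\tau_k$, and conclude by a law of large numbers for the bounded scalar variables $\trace[\tau_k V_k^* A V_k]$. The only (harmless) difference is the final LLN step: the paper shows these variables are pairwise uncorrelated with uniformly bounded variance and invokes the classical SLLN for uncorrelated sequences, whereas you condition on $(\tau_n)_n$ to obtain genuine conditional independence and run a fourth-moment Borel--Cantelli argument; both are valid.
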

\begin{proof}
Since both sides of the previous equation are self-adjoint matrices, it suffices to show that for any self-adjoint operator $A \in \M_d(\C)$ we have $\lim_{n \to \iy} \trace[\sigma_n A] = \trace[A]/d$. Using the invariance of the Haar measure, one can assume that the observable $A$ is diagonal $A = \sum_{i=1}^d s_i \ketbra{e_i}{e_i}$ in some fixed orthonormal basis $\{e_i\}_{i=1}^d$ of $\C^d$. In the same basis, we write  $\tau_k = (t_{i,j}^{(k)})_{i,j=1}^d$ and $V_k = (v_{i,j}^{(k)})_{i,j=1}^d$. To simplify notation, we put
\[\trace[\sigma_n A] = \frac{T_1 + \cdots + T_n}{n},\]
where $T_k = \trace[V_k \rho_k V_k^* A] = \sum_{i_1,i_2,j=1}^d t^{(k)}_{i_1,i_2} s_j v_{i_1,j}^{(k)} \overline{v_{i_2,j}^{(k)}}$. Using the fact that 
\[\E\left[v_{i,j}^{(k)}\overline{v_{i',j'}^{(k)}}\right] = \delta_{i,i'} \delta_{j,j'} \frac 1 d,\] 
one can easily check that that the random variables $T_k$ have mean $\trace[A]/d$, finite variance (a rough bound for $\E[T_k^2]$ is $\trace[A]^2$) and that they are not correlated ($\cov(T_k, T_{k'})=0$, if $k\neq k'$). It is a classical result in probability theory that in this case the (strong) Law of Large Numbers holds and thus, almost surely,
\[\lim_{n \to \iy} \trace[\sigma_n A] = \frac{\trace[A]}{d}.\]
\end{proof}
Putting the previous proposition and Lemma \ref{lem:U-iid-same-law} together, one obtains the main result of this section, an ergodic-mean convergence result for the sequence of states of the ``small'' system.
\begin{prop}
Let $(\rho_n)_n$ be the successive states of a repeated quantum interaction scheme with i.i.d. random unitary interactions. Then, almost surely, 
\[\lim_{n \to \iy }\frac{\rho_1 + \ldots + \rho_n}{n} = \frac{\I_d}{d}.\]
\end{prop}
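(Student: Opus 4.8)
The plan is to combine the two results that immediately precede the statement. Let $(V_n)_n$ be an auxiliary sequence of i.i.d.\ $\Haar_d$-distributed unitaries on $\U(d)$, taken independent of the whole family $\{U_n, \beta_n\}_n$ appearing in Eq.~(\ref{eq:cocycle}). By Lemma \ref{lem:U-iid-same-law}, the two random sequences $(\rho_n)_n$ and $(V_n \rho_n V_n^*)_n$ have the same law, as processes, on the path space $(\MD_d(\C))^{\N}$.

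Next I would apply the preceding proposition with the choice $\tau_n := \rho_n$. Its hypotheses are met: $(V_n)_n$ is i.i.d.\ Haar and, by construction, independent of $(\rho_n)_n$. The proposition then yields, almost surely,
\[\frac{V_1 \rho_1 V_1^* + \cdots + V_n \rho_n V_n^*}{n} \;\longrightarrow\; \frac{\I_d}{d}.\]

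Finally — and this is the only point requiring a moment's care — I would transfer this almost-sure convergence across the equality in law provided by Lemma \ref{lem:U-iid-same-law}. Introduce the set
\[ C = \Bigl\{ (\mu_n)_n \in (\MD_d(\C))^{\N} \ :\ \tfrac1n(\mu_1 + \cdots + \mu_n) \to \I_d/d \Bigr\}, \]
which is a Borel subset of the path space (it is the set on which a $\liminf$ of closed conditions is attained, hence a countable combination of Borel sets). Since $(\rho_n)_n$ and $(V_n\rho_nV_n^*)_n$ have the same distribution and the latter lies in $C$ with probability one, we get $\P\bigl((\rho_n)_n \in C\bigr) = \P\bigl((V_n\rho_nV_n^*)_n \in C\bigr) = 1$, which is exactly the assertion $\lim_n \tfrac1n(\rho_1+\cdots+\rho_n) = \I_d/d$ almost surely.

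The main obstacle is purely one of bookkeeping rather than of analysis: one must be sure that Lemma \ref{lem:U-iid-same-law} is being used as an equality of the laws of the \emph{full sequences} (not merely of the one-time marginals), so that the trajectory-dependent event $C$ receives the same probability under both; and one must check that the $V_n$'s can legitimately be chosen independent of the $(\rho_n)_n$ coming from Eq.~(\ref{eq:cocycle}) — which is precisely the way they were introduced in the lemma. No new estimates or limit arguments are needed beyond those already carried out in the two preceding results.
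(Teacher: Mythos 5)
Your proposal is correct and follows exactly the route the paper intends: it combines Lemma \ref{lem:U-iid-same-law} (equality in law of the full sequences $(\rho_n)_n$ and $(V_n\rho_nV_n^*)_n$) with the preceding proposition applied to $\tau_n=\rho_n$, and the paper itself states the result by "putting the previous proposition and Lemma \ref{lem:U-iid-same-law} together". Your explicit justification of the transfer step — that Ces\`aro convergence is a Borel event on path space, so the equality of laws of the whole trajectories carries the almost-sure statement over — is precisely the bookkeeping the paper leaves implicit.
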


\bigskip
\textbf{Acknowledgments}
Both authors would like to thank their advisor Professor St\'ephane Attal for his guidance, constant advice and support throughout their graduate careers.

\end{document}